\documentclass[11pt]{article}
\usepackage{amsmath,amsthm,amssymb,amsfonts,mathtools}
\usepackage{algorithm,algorithmic}
\usepackage{graphicx}
\usepackage{xspace}
\usepackage{hyperref}
\usepackage[margin=1in]{geometry}
\usepackage{xcolor}

\newcommand{\Ms}{{\cal M}}
\newcommand{\Rs}{{\cal R}}
\newcommand{\Cs}{{\cal C}}

\newcommand{\Es}{{\cal E}}
\newcommand{\Emax}{\Es^{max}}

\newcommand{\Ss}{{\bf S}}
\newcommand{\vs}{{\bf v}}

\newcommand{\Sos}{{\cal SO}}
\newcommand{\Sosel}{{\cal SO}_{el}}
\newcommand{\supl}[2]{{\text supl}(#1\!\rightharpoonup\!#2)}      
\newcommand{\minprod}{\operatorname{minprod}}              
\newcommand{\mincons}{\operatorname{mincons}}              
 \newcommand{\fcomp}[1]{\xleftrightharpoons[]{#1}}
 
\newcommand{\prods}{\mathrm{prod}}
\newcommand{\supp}{\mathrm{supp}}
\newcommand{\clos}{\mathrm{clos}}
\newcommand{\js}{{\vee}}
\newcommand{\ms}{{\wedge}}

\newcommand{\Csr}{{\vec{\Cs}}}
\newcommand{\compsa}{\xleftrightharpoons[]{1}}

\newcommand{\GsE}{{\mathcal G}(\Es)}

\newcommand{\reqs}{\mathrm{req}}
\newcommand{\ERCs}{\mathrm{ERC}}

\theoremstyle{plain}
\newtheorem{thm}{Theorem}
\newtheorem{lem}{Lemma}
\newtheorem{defi}{Definition}

\theoremstyle{remark}

\title{Synergy and Complementarity: The Generative Basis of Chemical Organizations}

\author{Tomas Veloz\textsuperscript{1} \and Alejandro Bassi\textsuperscript{2} \and 
$^1$Departamento de Matemáticas, Universidad Tecnológica Metropolitana, Santiago 833038, Chile\and
$^2$Faculty of Medicine, Universidad de Chile, Santiago, Chile
\texttt{tomas.velozg@utem.cl}
}

\date{}

\begin{document}
\maketitle

\begin{abstract}
Identifying structural features ensuring the persistence of a reaction network is fundamental for understanding the evolution and complexification of biological entities, from origins of life to metabolic engineering. By defining organizations as subsets of species that are closed and self-maintaining, Chemical Organization Theory (COT) shows that organizations enable filtering out the regions of the phase space where attractors can exist. This result and several variants of it open a promising path to study the constitutive, developmental, and adaptive aspects of metabolic and other kinds of complex reaction networks where dynamical analysis and simulations are of little use. Despite various theoretical explorations current methods to compute the organizations are in general limited to dozens of species/reactions due to unclear combinatorial challenges.

In this article, we first perform a systematic study of such combinatorial challenges and identify a criteria of productive novelty and irreducibility that separate relevant from redundant combinations. Second, we identify the minimal building blocks that shall be combined to build organizations, called elementary reaction closures (ERC), and characterize them as a hierarchy. Third, we show that all persistent modules can be generated as combination of ERCs only, and operationalize two properties among ERCs, synergy and complementarity, that define a sufficient criteria to build such generators in a minimal way. Fourth we show that every relevant persistent module can be built from such minimal sequences of ERCs. 

We next show that the synergies and complementarities we use to generate persistent modules not only scale radically slower than usual combinatorial methods, but also signal interesting biological properties of reaction networks. We next quantify these structures across $438$ biological reaction networks
from the BioModels and BiGG databases. ERCs grow sublinearly with reaction count (exponent ${\approx}0.71$, $R^{2}=0.60$), providing a compressed skeleton of the full network; within each ERC hierarchy tree, depth scales as $|\Es|^{0.39}$ while branching remains near-constant (${\approx}1$--$2$ children per internal node), revealing a second layer of structural compression inside the hierarchy itself. More critically, the proportion of fundamental synergies and complementarities among all ERC pairs shrinks as networks grow even as raw counts increase, confirming that larger biological networks achieve persistence through a progressively smaller and more selective set of irreducible generative relationships.

We do not prescribe algorithms to compute organizations, but show strong evidence that algorithms based on our results could permit applying COT to
reaction networks containing thousands of reactions on desktop computers.
\end{abstract}

\textbf{Keywords:} Reaction networks, Chemical organization theory, Persistent modules, Synergy, Complementarity, Novelty 

\section{Introduction}

The mathematical study of reaction networks has become increasingly important for understanding complex systems ranging from biochemistry to ecology and social systems~\cite{seminal,peter2021linking, veloz2021toward, veloz2022modelling}. A fundamental challenge in this field is identifying persistent modules—subnetworks capable of sustained operation—within large reaction networks. This problem has direct applications in drug discovery, metabolic engineering, and systems biology because persistent sub-networks represent the parts of the reaction network that can operate independently in some dynamically stable conditions such as steady growth, self-maintainance, or periodic behavior. This operational conceptualization of persistence underlies theories of autocatalysis and emergence of life~\cite{peng2022hierarchical,hordijk2018autocatalytic,peter2023computing}, and in a more fundamental way represent the forms in which a system is assumed to be stable enough to be observed as such~\cite{chen2023applications,turanli2018network}.

Chemical Organization Theory (COT) was introduced by Peter Dittrich and Pietro Speroni di Fenizio in~\cite{seminal} for mathematically explaining the persistence of reaction networks in dynamical systems from its structural properties~\cite{peter2021linking,peter2023computing}. Due to its generality, COT has not only been applied to explain the behavior of biochemical systems~\cite{kreyssig2012cycles,peter2020structure}, but also the evolutionary dynamics of ecological systems~\cite{veloz2020complexity,veloz2021toward}, chemical computation~\cite{dittrich2007organization}, among, others~\cite{veloz2014reaction,Veloz2017a,heylighen2024chemical,dittrich2008chemical,veloz2021analytic,veloz2022modelling}.

The COT seminal paper~\cite{seminal} stated\\

{\centering \it  Identifying the computational complexity of algorithms able to find all organisations would be highly desirable. Especially it would be interesting to know, which kind of networks can be analysed with contemporary computers. Developing efficient algorithms for computing or approximating the set of organisations and estimating the computational complexity of these algorithms is a separate research topic for the future. So far we have implemented some preliminary algorithms. One of them computes the set of
organisations from the bottom up by starting from the smallest organisation and then, recursively, adding molecules in order to generated all the organisations above... }\\

Verifying whether a particular sub-network is an organization is equivalent to a linear programming problem~\cite{dantzig1963linear} and hence it is  tractable~\cite{chandru1998linear}. However, knowing whether a reaction network contains an organization at all is a NP-hard problem~\cite{centler2008computing}. \\
 Various approaches to compute persistent modules, mostly operating in a bottom-up fashion, have been proposed~\cite{centler2008computing}. Unfortunately, bottom-up approaches typically explore a vast number of {\it irrelevant} closed sets~\cite{speroni2015lattice}. For example, combining two closed sets with no species in common leads to a closed set~\cite{Veloz2019b}. Typically, reaction networks admit an exponential number of such combinations. For example, early E. coli metabolic models containing around 600 reactions were still solvable using parallel computation~\cite{centler2010parallel}. However, modern genome-scale reconstructions include over 2,400 reactions\cite{orth2011comprehensive}, and human metabolism involves 8,000+ reactions, leave modern biologically relevant networks intractable. Therefore, the emergence of omics integration and the growth of large-scale RN models in different areas such as medicine~\cite{selevsek2020network,turanli2018network,chen2023applications}, origins of life~\cite{preiner2020future,cuevas2023modular}, and interdisciplinary areas such as ecology~\cite{scott2024metabolic,mcdaniel2021prospects}, social sciences~\cite{veloz2014reaction,veloz2022modelling} and others~\cite{garrido2022integrating}, require a systematic analysis to explain when persistent modules are relevant from a generative perspective, so efficient algorithms to compute organizations can be implemented~\cite{peter2023computing}. 

In this work we take a step back from the attempts to build efficient algorithms to compute organizations~\cite{centler2008computing,centler2010parallel,peter2023computing}, and focus on the inner generative structure of reaction networks only.  Specifically, we develop a novel framework to describe and characterize persistent modules as combination of smaller modules. In particular, we define {\it productive novelty} as a property of a combined module whose persistence cannot be deduced directly from the persistence of its parts. We operationalize productive novelty by two features: synergy and complementarity: Two modules are synergetic if when combined they trigger a novel reaction that cannot be triggered by the parts, and complementary when the productivity of the combination is better suited for self-maintainance than the productivity of the parts (this is properly formalized in sections~\ref{Synergy} and~\ref{Complementarity}). 

Our formalization demonstrates that synergies and complementarities are sufficient to build all relevant persistent modules, yet identifying them
does not require computing organizations directly.
Rather than prescribing an algorithm, we apply brute-force enumeration of synergies and complementarities across $438$ biological reaction networks
from the BioModels and BiGG databases and estimate their scaling laws. The ERC count itself already compresses the reaction network sublinearly
(exponent ${\approx}0.71$), and the proportion of \emph{fundamental} synergies and complementarities among all ERC pairs shrinks further as
networks grow, even as raw counts increase --- confirming that larger networks achieve persistence through an increasingly selective set of
irreducible generative relationships. These results strongly motivate algorithms that exploit this structure, which could make COT applicable to networks with thousands of reactions.

It is important to mention that we introduce a new and conceptually dense formal framework that has no counterpart in reaction network theory. In glossary on table~\ref{glossary} we present the terminology, with the exception of some elements that are introduced and discussed more thoroughly. 

\begin{table}[h!]
\centering
\small
\begin{tabular}{|p{3.1cm}|p{2.5cm}|p{9.8cm}|}
\hline
\textbf{Term} & \textbf{Symbol} & \textbf{Meaning} \\
\hline
Reaction Network & $(\Ms,\Rs)$ & Species $\Ms=\{s_1,...,s_n\}$, reactions $\Rs=\{r_1,...,r_k\}$. \\
Support & $\supp(r)$ & Necessary species to trigger a reaction.\\
Activable  Reactions & $\Rs_X$ & Reactions that can be triggered from $X$.\\
Produced& $\prods(\Rs_X)$ & Species produced by $X$.\\
Required  &$\reqs(\Rs_X)$& $\supp(\Rs_X)-\prods(\Rs_X)$: Species consumed but not produced by $X$.\\
Closure  & $\clos(X)$ & Transitive closure of $X\cup\prods(\Rs_X)$.\\
Join& $X\js Y$& $\clos(X\cup Y)$ the joint closure.\\
Reactive  & $\Csr$ & Closed sets whose species participate in at least one reaction. \\
Connected & $\Csr_{con}$ & A reactive closed set whose species are all connected via reaction pathways. \\
ERC & $\Es$ & Elementary Reaction Closures: $\clos(\supp(r))$ for a single $r$.  \\
SSM & $\reqs(X)=\emptyset$ & Semi-self-maintaining: Every consumed species in $\Rs_X$ can be produced by $\Rs_X$.\\
Relevant Semi-Organizations & $\Sos$& Connected, closed and semi-self-maintaining modules.\\
Generator & $G(X)$ & A sequence of ERCs whose join closure equals $X$.\\
Synergy & $E+E' \to E_{syn}$ & Combination of ERCs that activates another ERC. \\
Complementarity & $E\fcomp{s}E'$ & A relationship where one ERC produces a species $s$ that another ERC requires.\\
EPM & $\Sos_{el}$ & Elementary persistent module: A semi-organization with no semi-organization within.\\
ESPM & ESPM & $\Sos-\Sos_{el}$ Elementary-supported persistent module. \\
\hline
\end{tabular}
\caption{Glossary of key terms and symbols used throughout the paper.}
\label{glossary}
\end{table}

The paper is organized as follows: In section~\ref{RN} we introduce reaction networks, the basics of COT and analyze the current approaches and understanding of the problem of building persistent modules. In sections~\ref{ERC}, we introduce Elementary Reaction Closures and its hierarchical structure. Next, in sections~\ref{Synergy}, and~\ref{Complementarity} we introduce the Synergy and Complementarity relations, and show that they can be studied directly from the Elementary Reaction Closures (ERCs) without loss of generality. The latter is formalized through a number of results that clarify how the inner generative structure of persistent modules operates within the larger reaction network. We conclude our theoretical construction in section~\ref{ProductiveNovelty} by showing that the ERC hierarchy, equipped with highly restrictive forms of synergy and complementarity, so called fundamental, are sufficient to build all persistent modules of a reaction network with minimal sequences of ERC combinations. Finally, we analyze the statistical and scaling structure of our new concepts using biological networks in the BioModels~\cite{BioModels2020} and BiGG~\cite{schellenberger2010bigg} databases, and discuss a series of challenges and opportunities that these results open up for the future of reaction network modeling and analysis.  

\section{Reaction Networks and Chemical Organizations}
\label{RN}
\subsection{Preliminaries}
A reaction network (RN) consist of a set of species, representing the primary entities our system is made of. In chemistry these are assumed to be chemical elements or molecules. For other applications these could be subatomic particles, information resources, ecological species, beliefs,  etc.~\cite{seminal,Veloz2017a,heylighen2023modeling}. A {\it reaction} is triggered when specific collections of species, so-called reactants, encounter. The result is that the reactants become another collection of species, so-called products. 

A RN can be described as a set of species $\Ms=\{s_1,\ldots,s_n\}$ and a set of reactions $\Rs=\{r_1,\ldots,r_k\}$, where each reaction  $r_i$ is denoted as:

\begin{equation}
r_i=a_{i1}s_1+\cdots + a_{in}s_n\to b_{i1}s_1+\cdots + b_{in}s_n,
\label{eq:reaction}
\end{equation}
with $a_{ij} \geq 0$ and $b_{ij}\geq 0$ for $j=1,...,n$. Hence, a species $s_j \in \Ms$ is called a reactant or product of $r_i$ if $a_{ij}>0$ or $b_{ij}>0$ respectively. For each $r\in \Rs$ we define the support of $r$ as the set of reactants of $r$ and denote it by $\supp(r)$ . Similarly, $\prods(r)$ is the set of products of $r$. When a RN is initialized from a set of species $X$, we denote by $\Rs_X \subseteq \Rs$ the set of reactions whose support is in $X$, and extend naturally the notions of support and product to a set of reactions by $$\supp(\Rs_X)=\bigcup\limits_{r\in \Rs_X} \supp(r),\quad \prods(\Rs_X)=\bigcup\limits_{r\in \Rs_X} \prods(r).$$

\begin{defi}
$X$ is \textit{closed} if and only if $\prods(\Rs_X) \subseteq X$.
\label{def:closed}
\end{defi}

For a set of species $X$ to be persistent it is necessary (but not sufficient) that it is closed. If $X$ is able to regenerate its components, i.e. self-maintain, we improve the persistence criteria. 

\begin{defi}
A set $X$ is semi-self-maintaining (SSM) if and only if its required set of species $\reqs(X)=\supp(\Rs_X)-\prods(\Rs_X)$ is empty.
\label{SSM}
\end{defi}
SSM sets are able to produce all species that they consume, but not necessarily at the same quantity or rate than they are consumed. To formalize quantitatively self-maintainance,  the stoichiometric matrix $\Ss$ is defined by $\Ss_{ij}=b_{ij}-a_{ij}$, and determines the net production of each species $s_j$ by each reaction $r_i$, $j=1,...,n$; $i=1,...,k$. 

\begin{defi}
Let $\Ss$ be the stoichiometric matrix of $(\Ms,\Rs)$. $X\subseteq \Ms$ is self-maintaining if and only if there exists a vector $\vs$ such that for all $i=1,...,k$ the jth-coordinate $\vs[j]$ holds
\begin{equation*}
\begin{cases}
\vs[i]>0 & \text{if } r_i \in \Rs_X, \\
\vs[i]=0 & \text{if } r_i \notin \Rs_X.
\end{cases}
\end{equation*}
and 
$$\Ss\vs\geq \vec{0}$$
\label{self-maintainance}
\end{defi}
Self-maintaining sets can implement reaction pathways, represented by a vector $\vs$, that on the one hand trigger all reactions in $\Rs_X$ in a strictly positive rate, and the result of such process regenerates all used components. Computationally speaking, deciding self-maintainance of a set of species is a Linear Programming problem~\cite{dantzig1963linear}. 
\begin{defi}
A set of species $X\subseteq \Ms$ that is closed and (semi-)self-maintaining is called a (semi-)organization. The set of (semi-)organizations is referred to as (semi-){\it Orgs}. 
\end{defi}
It has been shown that all fixed points of a dynamical system built upon a reaction network correspond to organizations~\cite{seminal}. Further extensions that precise the criteria and conditions to relate chemical organizations to other persistent modules such as periodic orbits and limit cycles of a dynamical system built upon a reaction network have been developed in~\cite{peter2010feasibility,peter2011relation,Veloz2017b,peter2021linking}, and successfully applied to multiple areas of research related to biochemistry as well as interdisciplinary applications~\cite{velegol2018chemical,kovacs2020sustainability,veloz2021analytic} (see introduction for other references). Organizations differentiate from approaches such as elementary modes~\cite{schuster1994elementary}, T-invariants~\cite{murata2002petri}, and other models of stability that do not require all reactions to occur. In this sense, chemical organizations are a more realistic model of persistence because their self-production conditions demands that "all that can possibly occur must occur in a positive rate"~\cite{Veloz2017a}.  


\subsection{Generating Persistent Modules}
\label{connectedness_rel}

Algorithms to compute persistent modules of a RN have approached different strategies for generating persistent modules. The natural first step is to build closed sets.

\begin{defi}
Let $X \subseteq \Ms$, we define the \textit{generated closure} $\clos(X)$ of $X$, as the smallest closed set containing $X$. 
\label{def:genclos}
\end{defi}
Interestingly, the closure $\clos(X)$ of a set of species $X$ is unique. This can be proved by considering a generative procedure that, starting from $X$, adds products $\prods(\Rs_X)$ to $X$, resulting in a new set $X'=X\cup \prods(\Rs_X)$. The new set $X'$ might activate new reactions $r\in \Rs_{X'}$. Thus, by recursively following this procedure until no novel reactions become active we obtain the smallest closed set that contains $X$, and it is unique~\cite{seminal}. This process is ensured to end because $\Ms$ is finite, but it typically ends before $\Ms$ is reached. In fact, reaction networks typically have several closed sets of different sizes being $\Ms$ the largest of all.

\begin{defi}
The collection of all closed sets $\mathcal{C}$ of a RN $(\Ms,\Rs)$ is called the \textit{closed sets structure}.
\label{def:enclosed}
\end{defi}

A set $X$ of species that does not trigger any reaction is trivially SSM and closed, because $\reqs(\Rs_X)=\prods(\Rs_X)=\emptyset$.

More generally, every time we add a species $s$ to a persistent module $X\subset \Ms$ so that it does not participate in any reaction, we obtain a different persistent module because neither closure or self-maintainance is affected~\cite{veloz2019existence}. 

\begin{defi}
A species $s\in \Ms$ is reactive w.r.t to $X$ if and only if there exists $r \in \Rs_{X\cup s}$ such that $s\in \supp(r) \cup \prods(r)$. A set of species $X$ is reactive if and only if each species $s\in X$ is reactive w.r.t to $X$. We call $\Csr$ the set of reactive closed sets.
\label{def:reac_set}
\end{defi}

From a generative point of view, building non-reactive closed sets is irrelevant. In fact, its dynamical properties are exactly the same with or without a non-reactive species. The efficiency gain in not computing non-reactive closed sets is huge however: the number of non-reactive closed sets that one can build from a closed set $X$ is exponential with respect to total of non-reactive species with respect to $X$.

Consider for example a reaction network with three kinds of species: species $\{s_i\}$ that reproduces by consuming a food source $\{f_i\}$ specific to itself, and produces a residual $y_i$ that is also specific, for $i=1,...,n$. Hence $\Ms=\{f_i,s_i,y_i\}$ with $i=1,...,n$, and reactions 
\begin{equation}
r_{i}=f_i+s_i\to 2s_i+y_i\text{, for } i=1,...,n.
\label{RN-Ex0}
\end{equation}

Let $X_i=\{f_i,s_i,y_i\}$ for $i=1,...,n$. Note that $X_i\in\Csr$ for $i=1,...,n$, and since each of these closed sets is non-overlapping with all others, we can build a new closed set for each of the subsets formed by the remaining species which amounts to $ 2^{3(n-1)}$ combinations minus $n-1$ combinations because each $X_j$ $j\neq i$ can be obtained in two ways, from $\{f_j,s_j\}$ and from $X_j$ itself. Therefore, we have that $2^{3(n-1)}-(n-1))$ different closed sets are formed out of each $X_i$. These closed sets typically contain several non-reactive species. Therefore, $\Csr$ is expected to be exponentially smaller than $\Cs$. 

Despite the significant reduction brought by focusing on reactive closed sets, $\Csr$ still contains several elements that are irrelevant from a generative perspective. Consider for example $X=X_i\cup X_j$ with $i\neq j$. In this case, $X$ is closed, but the reason is that both $X_i$ and $X_j$ are closed and they do not couple in any way within $X$. Therefore, $X$ does not exhibit any novelty with respect to its parts~\cite{veloz2019existence}. 
\begin{defi}
Two species $s_1,s_2\in X$ are directly connected w.r.t to $X$ if and only if there exists $r \in \Rs_X$ such that $\{s_1, s_2\} \subseteq \supp(r) \cup \prods(r)$.\\
Two species $s_a,s_b$ are connected if there is a sequence of species $s_1,...,s_l$ where $s_1=s_a$, $s_l=s_b$ and $s_i$ is connected to $s_{i+1}$ for all $i=1,...,l-1$.\\
A set of species $X \in \Csr$ is connected if and only if each pair of species $s_1,s_2\in X$ is connected w.r.t to $X$. Moreover, we call $\Csr_{con}$ the set of connected closed sets.
\label{def:conn_set}
\end{defi}

Note that in our the example in Eq.~\eqref{RN-Ex0} every combination of $\{X_i\}$ is reactive but not connected, implying $|\Csr_{con}|=n$, $|\Csr|=2^n$, and $|\Cs|= 2^{3n}-|\Csr|=2^n(2^{2n}-1)$.\\

First efforts to identify persistent modules of a RN computed connected closed sets bottom up while checking self-maintainance   ~\cite{centler2008computing,centler2010parallel}. Unfortunately, the combinatorial explosion of generative paths leading to connected semi-organizations makes these approaches intractable for moderately large networks, even using parallel computation~\cite{centler2010parallel}. Alternatively, initial attempts tried identifying the smallest self-producing reaction pathways, better known as elementary modes or T-invariants~\cite{schuster1994elementary,grafahrend2008modularization}, and combine these pathways to discover closed sets. This approach proved less efficient than the former approach in several cases, especially when the species can be produced by several linearly independent pathways~\cite{centler2008computing}.  More recent attempts to compute persistent modules have tried to harness the mathematical structure of closed sets and exploit their properties to simplify their construction.

\subsection{Order-Theoretical Approach to Generate Persistent Modules}
\label{LatticeClosedSets}
Let $X,Y\subseteq\Ms$ and 
\begin{equation}
X \js Y=\clos(X\cup Y),\quad \text{and} \quad X \ms Y=\clos(X\cap Y),
\label{join-meet}    
\end{equation}
be called {\it join} and {\it meet} of $X$ and $Y$ respectively. In~\cite{seminal} it is shown that the triplet $(\Cs, \js, \ms)$ is a {\it lattice}, meaning a partial ordered set that is associative and commutative with respect to $\js$ and $\ms$, and the following identities hold:
\begin{equation}
\begin{split}
X\js (X\ms Y)&=X,\quad X\ms (X\js Y)=X\quad (\text{absorption laws}),\\
X\js X&=X,\quad X\ms X=X\quad (\text{idempotent laws})
\end{split}
\label{Lattice-axioms}   
\end{equation}
Note that if we identify two closed sets, we can combine them using $\ms$ and $\js$ to build two other closed sets. Therefore, these operators can be exploited to produce generative algorithms to compute $\Cs$~\cite{speroni2015lattice,peter2023computing}. Although $\ms$ is necessary to equip $\Cs$ with the lattice structure, from a generative perspective it is not a useful operator, as $\js$ is sufficient to build $\Cs$ (we will show this in detail). Therefore, the actual structure that we will be considering is a semi-lattice~\cite{gratzer2002general}. In order to show the inefficiencies induced by the use of $\ms$ consider the reaction network $$r_1=a+b\to c;\quad r_2=a+d\to e.$$
Note that $X_1=\{a,b,c\}$ and $X_2=\{a,b,d\}$ are in $\Csr_{con}$ but $X_1\ms X_2=\{a\}\notin \Csr_{con}$. Indeed, $X_1\ms X_2$ is trivially closed because it is non-reactive.   

In summary, the difficulties in identifying the persistent modules of a reaction network stem from its combinatorial structure: It is not clear how closed and SSM subnetworks emerge as combinations of smaller modules. Previous algorithms to compute organizations first compute semi-orgs and then verify self-maintainance~\cite{centler2008computing,peter2023computing}, but it is in this first filtering where the computation of organizations becomes unfeasible

\begin{defi}
Let ${\Sos}$ be the set of connected semi-organizations. 
\end{defi}

\begin{lem}
$\Sos \subseteq \Csr_{con}\subseteq \Csr\subseteq \Cs$
\label{Sos_subset}
\end{lem}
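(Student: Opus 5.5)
The plan is to verify the three inclusions separately, from right to left, since each amounts to unfolding a definition. For $\Csr\subseteq\Cs$: by Definition~\ref{def:reac_set}, $\Csr$ is the set of reactive \emph{closed} sets, so every element of $\Csr$ is closed and lies in $\Cs$ by Definition~\ref{def:enclosed}. For $\Csr_{con}\subseteq\Csr$: Definition~\ref{def:conn_set} only applies the notion of connectedness to sets $X\in\Csr$ and defines $\Csr_{con}$ as the connected ones among them, so the inclusion is again immediate.

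The only inclusion with some content is $\Sos\subseteq\Csr_{con}$. Take $X\in\Sos$, a connected semi-organization. It is closed by definition of semi-organization, and connected by definition of $\Sos$. It remains to check that it is reactive, so that $X\in\Csr$ and hence $X\in\Csr_{con}$. I will read ``connected semi-organization'' in the sense of Definition~\ref{def:conn_set}: every pair $s_1,s_2\in X$ is joined by a chain of species that are directly connected w.r.t.\ $X$. Suppose $|X|\ge 2$ and fix $s\in X$. Choose $s'\neq s$ in $X$. The chain from $s$ to $s'$ has a first step $s=s_1,s_2$, and that step requires some $r\in\Rs_X\subseteq\Rs_{X\cup s}$ with $s\in\supp(r)\cup\prods(r)$. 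This is exactly the statement that $s$ is reactive w.r.t.\ $X$, so $X$ is reactive.

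The main obstacle is the degenerate case: singletons and the empty set, together with the fact that connectedness is formally defined only for members of $\Csr$. I would handle this by taking, as the glossary entry for $\Sos$ and the surrounding discussion of reactive sets suggest, the convention that ``connected'' includes reactivity, i.e.\ that $\Sos$ is defined as the semi-organizations lying in $\Csr_{con}$. Under that convention the inclusion holds by definition. If instead one insists on the pairwise chain reading, I would treat a singleton $\{s\}$ as connected only when $s$ is related to itself through some reaction in $\Rs_X$ (the case $s_1=s_2$ in the definition of direct connection). That makes $\{s\}$ reactive, and the empty set is vacuously reactive and closed only when no reaction has empty support. I will state this convention explicitly, and then the chain of inclusions follows.
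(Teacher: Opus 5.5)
Your proposal is correct and matches the paper, which states this lemma without proof because each inclusion follows directly from the definitions: $\Cs$ is all closed sets, $\Csr$ the reactive ones, $\Csr_{con}$ the connected members of $\Csr$, and $\Sos$ the semi-self-maintaining sets among these. Your extra check that pairwise connectedness forces reactivity when $|X|\ge 2$ is a harmless refinement of the same argument, and your handling of the degenerate cases (reading ``connected'' as presupposing membership in $\Csr$, as Definition~\ref{def:conn_set} does) is sound.
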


We will refer to $\Sos$ as the set of {\bf relevant persistent modules}. In this paper we develop a novel generative structure underlying $\Sos$. For this reason, we will use the terms semi-orgs and persistent modules interchangeably from now on. 

\subsection{How Efficient Generation Should Look Like?}

\begin{defi}
Let $X\in\Csr$. We say that a sequence of closed sets $G(X)=(X_1,...X_{l})$ with $X_i\in \Csr$ for $i=1,...,l$ generates $X$ if and only if 
$$\bigvee\limits_{X_i\in G(X)} X_i = X.$$
We also say $G(X)$ is a generator of $X$.
\label{cover_def}
\end{defi}

Note that generators are not sets of closed sets but ordered sequences. Interestingly:

\begin{lem}
\label{SO_condition_gen}
Let $G(X)=(X_1,...,X_L)$ be a generator. $X\in\Sos$ if and only if $$\bigcup_{i=1}^L\supp(\Rs_{X_i})\subseteq\prods(\Rs_X)$$ 
\end{lem}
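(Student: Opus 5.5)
The plan is to reduce the statement to the definition of semi-self-maintenance applied to $X$ itself. Closedness of $X$ comes for free from the generator, so the only real content is comparing $\supp(\Rs_X)$ with the union $\bigcup_i \supp(\Rs_{X_i})$. Connectedness is not visible in the displayed condition. I therefore read the lemma as characterising the semi-organization property of $X$, with connectedness of $X$ taken as given whenever we conclude $X\in\Sos$. I will say this explicitly when writing the proof.

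First I record two elementary facts.

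\emph{(a) $X$ is closed.} By Definition~\ref{cover_def}, $X=\bigvee_i X_i$ is a join, hence a closure. I will also check that an iterated join of closed sets equals $\clos(\bigcup_i X_i)$. This follows from monotonicity and idempotence of $\clos$: $\clos(\clos(A)\cup B)=\clos(A\cup B)$.

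\emph{(b) $\Rs_{X_i}\subseteq \Rs_X$ for every $i$.} Since $X_i\subseteq X$, every reaction whose support lies in $X_i$ also has support in $X$. Consequently $\supp(\Rs_{X_i})\subseteq\supp(\Rs_X)$ and $\prods(\Rs_{X_i})\subseteq\prods(\Rs_X)$.

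\emph{Forward direction.} If $X\in\Sos$, then $\reqs(X)=\emptyset$, i.e.\ $\supp(\Rs_X)\subseteq\prods(\Rs_X)$. Combined with (b), this gives $\bigcup_i\supp(\Rs_{X_i})\subseteq\supp(\Rs_X)\subseteq\prods(\Rs_X)$.

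\emph{Converse.} Assume the displayed inclusion and take $s\in\supp(\Rs_X)\subseteq X$. I split into two cases according to how $s$ entered $X$ under the iterative description of $\clos(\bigcup_i X_i)$ recalled after Definition~\ref{def:genclos}.
\begin{itemize}
\item If $s\notin\bigcup_i X_i$, then $s$ was added at some stage as a product of a reaction whose support lies in $X$. Hence $s\in\prods(\Rs_X)$.
\item If $s\in X_i$ for some $i$, I use that $X_i\in\Csr$ is reactive (Definition~\ref{def:reac_set}). Because $X_i$ is closed, $X_i\cup s=X_i$, so there is $r\in\Rs_{X_i}$ with $s\in\supp(r)\cup\prods(r)$.
  \begin{itemize}
  \item If $s\in\prods(r)$, then $s\in\prods(\Rs_{X_i})\subseteq\prods(\Rs_X)$ by (b).
  \item If $s\in\supp(r)$, then $s\in\supp(\Rs_{X_i})\subseteq\prods(\Rs_X)$ by the hypothesis.
  \end{itemize}
\end{itemize}
In every case $s\in\prods(\Rs_X)$, so $\reqs(X)=\emptyset$. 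Together with (a), $X$ is a semi-organization.

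The main obstacle is the second case of the converse. A species of some $X_i$ may be inert inside $X_i$ yet become a reactant of a newly triggered reaction of $X$, and then the hypothesis on the supports $\supp(\Rs_{X_i})$ would not cover it. The point of the argument is that reactivity of each generator element $X_i\in\Csr$ rules this out, which is exactly why Definition~\ref{cover_def} requires $X_i\in\Csr$ and not merely $X_i\in\Cs$.
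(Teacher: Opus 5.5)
Your proof is correct and is essentially the paper's argument. The forward direction uses $\supp(\Rs_{X_i})\subseteq\supp(\Rs_X)$, and the converse shows every species of $X$ lies in $\prods(\Rs_X)$, either because it was added during the closure or because it belongs to some $X_i$ and hence to $\supp(\Rs_{X_i})\cup\prods(\Rs_{X_i})$. You also make explicit two points the paper's proof leaves implicit: this last step needs reactivity of the $X_i$, and the displayed condition cannot give connectedness (two disjoint two-species cycles satisfy it), so the converse really only delivers the semi-organization property, as you say.
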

\begin{proof}
$\Rightarrow:$ $X\in\Sos$ implies $\supp(\Rs_X)\subseteq\prods(\Rs_X)$. Since $\bigcup_{i=1}^L\supp(\Rs_{X_i})\subseteq \supp(\Rs_X)$ the statement follows.

$\Leftarrow:$ Since $G(X)$ is a generator, this means that $\bigvee_{i=1}^L X_i=X.$ This implies that every species $s\notin \bigcup_{i=1}^L\supp(\Rs_{X_i})$ must be in the products of some reaction of $\Rs_{X_i}$ with $i=1,...,L$, or by another reaction activate in the transitive closure of $G(X)$ (otherwise it would not be in $X$). Therefore, $s\in \prods(\Rs_X)$.
\end{proof}

Note that the same closed set can have multiple generators. Every closed set $X$ is generated by the one-element sequence $(X)$, by the sequence of all closed sets it contains, and by a combinatorial number of intermediate collections; and since $\js$ is commutative and idempotent, any reordering of a generator generates the same $X$. The relation ``generates'' is thus cheap to satisfy and enormously many-to-one. Therefore, we do not only have to avoid exploring generative paths whose persistence is a trivial consequence of the persistence of their parts, but also avoid re-deriving the same module along multiple generative paths. 

We already removed redundancy by discarding non-reactive species and disconnected closed sets in Section~\ref{connectedness_rel} because in both cases the behaviour of the whole is merely the sum of the behaviour of its parts, and Lemma~\ref{SO_condition_gen} shows that verifying semi-self-maintainance for a closed set is reduced to verify the self-maintainance of its generator. Efficient generation of the whole set of persistent modules asks us to push this same reasoning to semi-self-maintainance: an extension of a generator should be worth making only when \emph{the productive structure of the whole differs from that of its parts}.

From now on, we will refer to the generator obtained by adding and eliminating $E$ (or a set of ERCs $S$) from $G(X)$ by $G(X)+E$ and $G(X) - E$   respectively ($G(X)\pm S$). We can now state the efficient generation criteria precisely:
 
\begin{enumerate}
\item Identify a minimal collection of \emph{building blocks} $\Es$ from which every persistent module can be generated;
\item Produce a \emph{relevance criterion} that, given a generator $G(X)$ and a candidate block $E\in\Es$ (or a set of blocks $S\subseteq \Es$), decides --- without computing the result --- whether the extension $G(X)+E$ ($G(X)+S$) can lead to a persistent module that is not deducible from $X$ and $E$ ($S$) alone; and
\item a \emph{minimality criterion} ensuring that no generator carries a redundant block and that no persistent module is generated more than once.
\end{enumerate}
 
Let us make the relevance criterion concrete. Suppose we hold a generator $G(X)$ and consider extending it with a candidate block $E$ to form $G(X')=G(X)+E$. If $E\subseteq X$ the extension hence adds nothing and hence $X'=X$, so we may assume otherwise. By Lemma~\ref{SO_condition_gen} the persistence of $X'$ is governed by the species it consumes but cannot produce. Now if combining $X$ and $E$ activates no reaction beyond those already active in the parts, and neither side produces any species that the other consumes, then $X'$ inherits exactly the deficits of $X$ and $E$: its persistence status is already determined by the parts, and there is nothing to check. The combination hence requires that the persistence status cannot be deduced from its parts. Therefore, either the encounter of $X$ and $E$ \emph{activates reactions that neither could trigger alone}, formally $\Rs_{X'}\supsetneq \Rs_{X}\cup\Rs_{E}$, opening productive possibilities absent from both parts; or one side \emph{produces a species that the other requires}, formally $\prods(\Rs_{X})\cap\reqs(E)\neq\emptyset$ (or symmetrically), so that the combined deficit shrinks. We call the first phenomenon \emph{synergy} and the second \emph{complementarity}, and we refer to the presence of either as \emph{productive novelty}.
 
These properties are the formal counterparts of two pervasive features of living matter --- the emergence of metabolic function from the joint action of components that are inert in isolation, and the cooperative cross-feeding by which one subsystem supplies what another lacks. 

\section{Elementary Reaction Closures}

An elementary reaction closure (ERC) is the closure of the species required to trigger a single reaction. As such, ERCs are the smallest possible elements of $\Csr_{con}$. 

\label{ERC}
\begin{defi}
Let $r\in \Rs$. We define $\ERCs(r)=\clos(\supp(r))$ as the elementary reaction closure (ERC) of $r$. Moreover, we denote the set of all ERCs by $\Es$.
\end{defi}

Note that different reactions can produce the same ERC. For example, the set of reactions forming a loop
$$r_0=s_0\to s_1,~r_1=s_1\to s_2,~\cdots\ ,~r_l=s_l\to s_0,$$ generates the same ERC $ERC(r_i)=\{s_0,...,s_l\}$ for $i=1,...,l$.\\
By introducing the relation $r\sim r'$ iff $\ERCs(r)=\ERCs(r')$ we obtain that each ERC is an equivalence class of reactions under $\sim$, and hence a partitioning of the reactions into non-overlapping classes generating the same closed set~\cite{peter2023computing}. This helps to simplify the representation of the reaction network in two senses. First, ERCs are productively different when looking from the closure point of view, and the number of ERCs is always smaller or equal than the number of reactions, simplifying the exploration of the reaction network.

Elementary Reaction Closures (ERCs) were introduced under the name of {\it basic sets} by the authors of this paper in~\cite{veloz2019existence} and later conveniently renamed by Peter et al. to build an algorithm to generate distributed organizations~\cite{peter2023computing}. We remark that the methods we develop in this article can be applied to compute distributed organizations and other relevant structures such as (reflexive) autocatalytic sets~\cite{hordijk2018autocatalytic}. Such applications however, are out of the scope of this paper because in this article. Here we concentrate on unveiling how to avoid the combinatorial intricacies that are found when computing persistent modules only, a problem common to both computing organizations and distributed organizations~\cite{seminal, centler2008computing,speroni2015lattice,peter2023computing}. 

\subsection{Persistent ERCs: Inflow and co-production}

ERCs are certainly closed, but not persistent in general. Therefore characterizing when they are persistent deserves some clarification. 

\begin{defi}
We say $E\in\Es$ is persistent or a P-ERC if $E\in \Sos$.
\end{defi}

An important remark here is needed. In COT, a reaction specifying the inflow of a species $s$ to the system, or what is called a 'food species' in autocatalytic networks literature~\cite{hordijk2018autocatalytic}, or input places in Petri Nets literature~\cite{murata2002petri}, is represented by a reaction $\emptyset\to s$. Thus, we introduce a P-ERC $E_\emptyset$ which is associated to the closure of all inflow reactions. 

\begin{defi}
We call $E_\emptyset$ the closure of the inflow reactions. If a reaction network does not have inflow we define $E_\emptyset=\emptyset$.
\end{defi}
This ERC is special because it is a P-ERC and also for every other P-ERC $E$ we have that $E\to E_\emptyset$. Therefore, $E_\emptyset$ lies within each and every closed set. Therefore, it is redundant to every generator and hence we will not mention it unless necessary. 

The following result characterizes the inner structure of P-ERCs.

\begin{lem}
$E\in\Es$ is a P-ERC if and only if for all $r\in \Rs_E$ we have that $s\in\supp(r)$ implies either $s\in E_\emptyset$ or there exists $r'\in \Rs_{E}$ for which $s\in\prods(r')$
\end{lem}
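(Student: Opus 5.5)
The plan is to unfold the definitions and reduce the statement to the equality $\reqs(E)=\emptyset$. By definition, $E$ is a P-ERC iff $E\in\Sos$, that is, $E$ is closed, connected and semi-self-maintaining. Closure is automatic because $E=\clos(\supp(r))$ for some $r$. Connectedness I will treat as part of the standing fact, stated at the opening of the section, that ERCs are elements of $\Csr_{con}$. If needed, it can be checked directly. Every species of $E$ is either in $\supp(r)$, which forms a single connected block through $r$, or it is added at some stage of the iterative closure procedure described after Definition~\ref{def:genclos}. In the latter case it appears as the product of a reaction whose support was already present. An induction on the stage of that procedure therefore links it to the previously present species. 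The main work is thus the equivalence between $\reqs(E)=\supp(\Rs_E)-\prods(\Rs_E)=\emptyset$ and the stated condition.

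For ($\Rightarrow$), I assume $E$ is SSM and take $r\in\Rs_E$ and $s\in\supp(r)$. Then $s\in\supp(\Rs_E)\subseteq\prods(\Rs_E)$, so some $r'\in\Rs_E$ has $s\in\prods(r')$, which gives the second alternative. For ($\Leftarrow$), I take $s\in\supp(\Rs_E)$, so $s\in\supp(r)$ for some $r\in\Rs_E$. If some $r'\in\Rs_E$ produces $s$, then $s\in\prods(\Rs_E)$ and we are done. Otherwise $s\in E_\emptyset$, and I must still show $s\in\prods(\Rs_E)$.

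The obstacle is exactly this inflow case. Every species of $E_\emptyset$ lies in $\prods(\Rs_{\emptyset})$ or is produced within the closure of the inflow reactions. An inflow reaction $\emptyset\to s$ has empty support, so it belongs to $\Rs_X$ for every $X$, in particular to $\Rs_E$. More generally, $E_\emptyset\subseteq E$, since $E_\emptyset$ lies in every closed set (as remarked above), and $\Rs_{E_\emptyset}\subseteq\Rs_E$. Every species of $E_\emptyset$ is produced by some reaction of $\Rs_{E_\emptyset}$: it is either an inflow product or added during the closure iteration from $\emptyset$. Hence $s\in\prods(\Rs_E)$.

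There is a degenerate case when the network has no inflow, where $E_\emptyset=\emptyset$ and the first alternative is vacuous. I will also note that the disjunction in the statement is therefore formally redundant, but harmless: it only makes explicit the inflow contribution. This observation is the one subtle point and I will state it carefully.
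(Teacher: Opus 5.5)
Your argument is correct and follows the paper's route: reduce being a P-ERC to $\reqs(E)=\emptyset$ and unfold $\supp(\Rs_E)\subseteq\prods(\Rs_E)$. It is also more careful than the paper's. Your ($\Leftarrow$) supplies the inflow step the paper calls ``trivial'': $E_\emptyset\subseteq E$, and every species of $E_\emptyset$ is produced by a reaction of $\Rs_{E_\emptyset}\subseteq\Rs_E$. You are also right that the disjunction is redundant. The paper's case split in ($\Rightarrow$) on whether the producing reaction has empty support is unnecessary, since that reaction lies in $\Rs_E$ either way.

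One caveat concerns the fallback you offer for connectedness. The induction on closure stages fails exactly at inflow reactions. A species introduced by a reaction $\emptyset\to a$ comes from a reaction with empty support, so that reaction links it to nothing already present.

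This is more than a gap in your argument: with inflow, an ERC need not be connected in the sense of Definition~\ref{def:conn_set}. Take $\Rs=\{\emptyset\to a,\ b\to 2b\}$. Then $\ERCs(b\to 2b)=\clos(\{b\})=\{a,b\}$, which satisfies the lemma's condition and is SSM. Yet $a$ and $b$ share no reaction, so this set is not in $\Csr_{con}$ and hence not in $\Sos$.

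So the statement, and your reduction of it to $\reqs(E)=\emptyset$, holds only under the paper's implicit convention that $E_\emptyset$ is background and connectedness is read modulo it. The paper's own proof ignores connectedness entirely. You should invoke that convention explicitly rather than claim connectedness can be checked directly by your induction.
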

\begin{proof}
 $\Rightarrow$: Since $E$ is a P-ERC we have  $\reqs(E)=\emptyset$. This implies that for each $s\in\supp(\Rs_E)$ we have a reaction $r\in\Rs_E$ such that $s\in\prods(r)$. If $\supp(r)=\emptyset$ we fulfill the first condition of the implication, and if $\supp(r)\neq\emptyset$ we fulfill the second condition.\\
 $\Leftarrow$: Trivial.
 \end{proof}

\subsection{The ERCs Hierarchy}

ERCs have been already proposed as a promising approach to produce generative methods to compute all persistent modules of a reaction network~\cite{peter2023computing}. However, a crucial and unexplored feature of ERCs is that they form a partial order. As an example, consider the RN $(\Ms=\{s_1,...,s_4\},\Rs=\{r_1,...,r_5\})$, with 
\begin{equation}
\begin{split}
r_1&=s_1\to s_2, r_2=s_2\to s_1,\\
r_3&=s_3\to 2s_3,\\ 
r_4&=s_1+s_3\to s_4, r_5=2s_4\to s_3+s_1.
\label{RN-Ex1}
\end{split}
\end{equation}
Note that 
\begin{itemize}
    \item $E_1=\ERCs(r_1)=\ERCs(r_2)=\{s_1,s_2\}$,
    \item $E_2=\ERCs(r_3)=\{s_3\}$,
    \item $E_3=\ERCs(r_4)=\ERCs(r_5)=\{s_1,s_2,s_3,s_4\}$,
\end{itemize}
$E_3$ contains both $E_1$ and $E_2$., but $E_1\not \subseteq E_2$ and $E_2\not \subseteq E_1$. Therefore, $\Es$ is a partial order in general. 

\begin{defi}
Let $E_1$ and $E_2$ be two ERCs such that $E_2\subset E_1$. We say $E_1\to E_2$ is a direct containment if and only if there is not $E\in \Es$ such that $E_2\subset E$ and $E\subset E_1,$ and the pair of all ERCs and its direct containments is denoted as $(\Es,\to)$.
\end{defi}

From a graph-theoretical perspective $(\Es,\to)$ is in general a forest, i.e. an acyclic undirected graph (or equivalently a disjoint union of trees),  and from a lattice theoretical perspective $(\Es,\to, \js)$ is an atomic semi-lattice~\cite{gratzer2002general}. 

In order to remark the non-triviality of this structure shown the reaction network and ERC hierarchy for a BioModels reaction network with 26 species and 31 reactions (Biomodel 237). It is interesting to note that the number of ERCs is $21$, nearly a third smaller than the number of reactions. The most remarkable feature is that the depth and branching is not trivial and can hardly be noticed from the reaction network strucure: It contains two chains of length four, one chain of length three, two chains of length two, and two trees of depth and branching two. All chains end in $E_\emptyset$ as expected. Moreover, note that certain ERCs are closed (cyan), others are semi-organizations (orange), and others are organizations (green).  The latter implies that the ERC hierarchy contains already important information that can be used to generate persistent modules. We will analyze such structure in detail in what follows.

\begin{figure}[h]
\centering
\includegraphics[width=\textwidth]{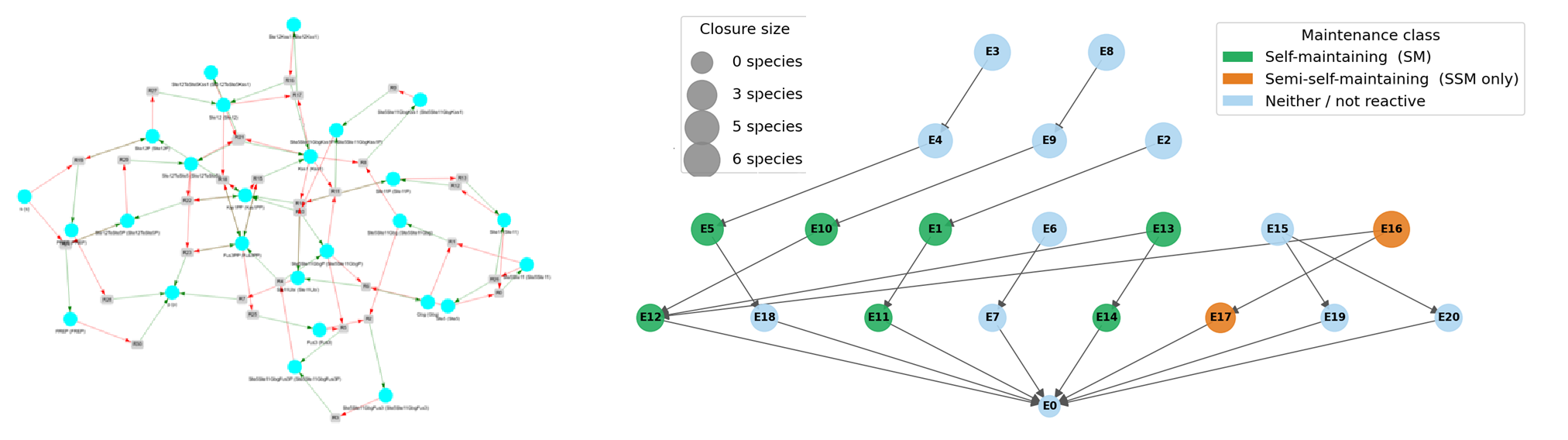}
\caption{ERC hierachy example for Biomodel 237 {\it 'Schaber2006\_Pheromone\_Starvation\_Crosstalk'}. Left: Reaction network. Right: ERC Hierarchy.}
\label{ERC_hierarchies}
\end{figure}

\subsection{ERC-Generators, Irreducibility and Generative Relevance}
\label{inner_structure}
\begin{defi}
A generator $G(X)=(E_1,...,E_l)$ where $E_i\in\Es$ is called an ERC-generator of $X$.\\
The set of ERC-generators is denoted by $\GsE$
\label{ERC_generators}
\end{defi}
ERC-Generators provide a sufficient class to generate $\Csr_{con}$.
\begin{lem}
\label{lem:ERC_cover}
Every $X\in\Csr_{con}$ admits a generator in $\GsE$.
\end{lem}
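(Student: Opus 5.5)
The plan is to take as generator the sequence of ERCs of all reactions active in $X$. Let $X\in\Csr_{con}$ and enumerate $\Rs_X=\{r_1,\dots,r_l\}$, which is nonempty since $X$ is reactive. Set $G(X)=(\ERCs(r_1),\dots,\ERCs(r_l))$, discarding repetitions if desired; by Definition~\ref{ERC_generators} this is an element of $\GsE$. The work is to show $\bigvee_{i}\ERCs(r_i)=X$, which I will do by proving both inclusions, using only that $\clos$ is monotone, idempotent and extensive.

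For $\subseteq$: each $r_i\in\Rs_X$ satisfies $\supp(r_i)\subseteq X$. Since $X$ is closed, $\clos(X)=X$, and monotonicity gives $\ERCs(r_i)=\clos(\supp(r_i))\subseteq X$. The union of these ERCs therefore lies in $X$, and applying $\clos$ once more gives $\bigvee_i \ERCs(r_i)=\clos\bigl(\bigcup_i\ERCs(r_i)\bigr)\subseteq X$. Monotonicity and idempotence follow from the uniqueness of the smallest closed superset noted after Definition~\ref{def:genclos}, so I would state them briefly there.

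For $\supseteq$: this is where reactivity is used, and I expect it to be the only real obstacle. Take $s\in X$. By Definition~\ref{def:reac_set}, $s$ is reactive with respect to $X$, so there is a reaction $r\in\Rs_{X\cup s}=\Rs_X$ with $s\in\supp(r)\cup\prods(r)$. If $s\in\supp(r)$, then $s\in\ERCs(r)$ directly. If $s\in\prods(r)$, then $\ERCs(r)$ is closed and contains $\supp(r)$, so $r\in\Rs_{\ERCs(r)}$ and hence $\prods(r)\subseteq\ERCs(r)$. In either case $s\in\ERCs(r)\subseteq\bigvee_i\ERCs(r_i)$. This proves equality.

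Connectedness of $X$ is not needed; reactivity and closedness suffice, so the lemma in fact holds for all of $\Csr$. The one subtlety to check is that a species with no reaction at all cannot lie in $X$, and reactivity excludes exactly this. Inflow species are covered by the reactions $\emptyset\to s$, whose ERC is contained in $E_\emptyset$. Finally, I would note that because $\js$ is commutative and idempotent, any ordering of the ERCs in $G(X)$, with or without repetitions, gives the same join, so the sequence is well defined as a generator.
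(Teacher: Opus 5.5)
Your proof is correct and follows the paper's own proof, which takes exactly the generator $(\ERCs(r))_{r\in\Rs_X}$ and asserts $X=\bigcup_{r\in\Rs_X}\ERCs(r)$ from reactivity. Your two inclusions (closedness of $X$ for $\subseteq$; reactivity plus closedness of $\ERCs(r)$ for $\supseteq$) supply the details the paper omits, and your remark that connectedness is never used, so the lemma holds on all of $\Csr$, agrees with the paper's proof, which invokes only reactivity.
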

\begin{proof}
$X$ reactive implies that $X=\bigcup\limits_{r\in\Rs_X} \ERCs(r)$, which is a ERC-generator of $X$.
\end{proof}
Lemma~\ref{lem:ERC_cover} shows that all relevant closed sets can be obtained by combining ERCs only. This is very important as it indicates that in order to generate persistent modules we can only focus on ERC combinations. Moreover, by Lemma~\ref{Sos_subset} we know that ERCs are hence sufficient to generate $\Sos$ as well. Therefore, we now turn our attention to understand how ERCs shall be combined to generate all persistent modules.
\begin{defi}
A generator $G(X)=(E_1,...,E_l)$ of $X$ is irreducible if no sub-sequence of $G(X)$ generates $X$.
\end{defi}
Note that every generator can be turned into an irreducible one by extracting ERCs that do not change the final closure until no extraction is possible without changing the closure. Therefore, irreducible generators are not unique in general, neither irreducible generators have necessarily the same size. In our example above, note that $(E_3)$ and $(E_1,E_2)$ are irreducible generators of $E_3$. From now on, we will only consider ERC-generators unless explicitly stated, so we will refer to them simply as generators. 

\begin{lem}
Let $X\in\Csr_{con}$, and suppose $G(X)$ contains two ERCs $E_1,E_2$ such that $E_1\to E_2$. Then $G(X)$ is not irreducible.
\label{lem:containment_out_ERC}
\end{lem}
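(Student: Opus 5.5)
The idea is to show directly that the ERC $E_2$ can be removed from $G(X)$ without changing the join. Then the subsequence $G(X)-E_2$ still generates $X$, which contradicts irreducibility. The key fact is that the join of closed sets is monotone and absorbs contained closed sets. Since $E_1\to E_2$ is a direct containment, we have in particular $E_2\subset E_1$.

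\textbf{Step 1: an absorption identity.} For closed sets $A\subseteq B$ and any set $Y$, I will show that $B\cup A\cup Y=B\cup Y$. Hence $\clos(B\cup A\cup Y)=\clos(B\cup Y)$. Applying this with $B=E_1$ and $A=E_2$ handles the case where $E_1$ and $E_2$ are joined together with the rest of the generator.

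\textbf{Step 2: order-independence of the join.} Write $G(X)=(E_{i_1},\dots,E_{i_l})$. I will check that the iterated join $\bigvee_j E_{i_j}$ equals $\clos\big(\bigcup_j E_{i_j}\big)$. This follows from the inclusion $Y\subseteq\clos(Y)$ together with the monotonicity and idempotence of $\clos$, which give $\clos(\clos(A)\cup B)=\clos(A\cup B)$. These properties hold because $\clos$ is the smallest closed superset (Definition~\ref{def:genclos}). The identity also justifies the paper's remark that generators are insensitive to reordering, so the position of $E_2$ in the sequence is irrelevant. Combining with Step 1,
\[
\bigvee_{E\in G(X)-E_2} E=\clos\Big(\bigcup_{E\in G(X),\,E\neq E_2} E\Big)=\clos\Big(\bigcup_{E\in G(X)} E\Big)=X .
\]
The middle equality holds because $E_2\subseteq E_1$ and $E_1$ remains in the sequence.

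\textbf{Step 3: conclude.} By Step 2, $G(X)-E_2$ is a proper subsequence of $G(X)$ that generates $X$. It is nonempty, since it contains $E_1$, and it consists of ERCs, so it lies in $\GsE$. Hence $G(X)$ is not irreducible.

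The only delicate point I expect is the bookkeeping in Step 2, namely that removing one entry of an ordered sequence preserves the nested join. This is handled by the identity $\clos(\clos(A)\cup B)=\clos(A\cup B)$. If $E_2$ occurs several times in the sequence, I will remove all of its occurrences; the same argument applies.
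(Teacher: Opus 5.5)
Your proof is correct and follows the paper's argument: the paper likewise notes that $E_2\subset E_1$ gives $E_1\js E_2=E_1$. Hence dropping $E_2$ leaves the join of $G(X)$ equal to $X$, contradicting irreducibility, and your Step 2 identity $\clos(\clos(A)\cup B)=\clos(A\cup B)$ just makes explicit the order-independence of the iterated join that the paper uses implicitly.
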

\begin{proof}
Since $E_2\subset E_1$ then $E_1=\clos(E_1\cup E_2)=E_1\js E_2$. Therefore,
$$\bigvee\limits_{E\in G(X)} E = \bigvee\limits_{E\in G(X), E\neq E_2} E=X.$$
\end{proof}

The ERC hierarchy serves as an efficient structure to build generators of persistent modules. It enables knowing in advance what ERCs shall be discarded as possible extensions of a generator as they will not contribute to the total closure. In what follows we explore further the internal productive novelty of persistent modules, and produce a much more sophisticated `relevance criteria' to decide which $E\in \Es$ shall be chosen to compute the extension from $G(X)$ to $G(X')=G(X)+E$.  

\section{Synergy}
\label{Synergy}
\subsection{Basic Concept}

The concept of synergy represents the generation of a reaction that a combination of closed sets triggers but no part can trigger alone.

\begin{defi}
Let $X,Y\in\Csr$ be incomparable.  We say the pair $(X,Y)$ is \emph{synergetic} if
$\Rs_{X\js Y}\supsetneq\Rs_X\cup\Rs_Y$; that is, their combination activates at least one
reaction that neither part activates on its own.  Any such reaction is called a
\emph{synergetic reaction of} $(X,Y)$.
\label{def:synergy_sets}
\end{defi}
At the level of ERCs, synergy takes the following form.
\begin{defi}
Let $\{E_1,\dots,E_k\}\subseteq\Es$ with $k\geq 2$, and write $X=\bigvee_{j=1}^{k}E_j$.
We say that $\{E_1,\dots,E_k\}$ is \emph{synergetic} if there exists a reaction
$r\in\Rs_{X}$ such that
$$r\notin\Rs_{\bigvee_{E\in\mathcal{S}}E}\qquad
  \text{for every proper subcollection }\mathcal{S}\subsetneq\{E_1,\dots,E_k\}.$$
Equivalently, $\supp(r)\subseteq X$ but
$\supp(r)\not\subseteq\bigvee_{E\in\mathcal{S}}E$ for all
$\mathcal{S}\subsetneq\{E_1,\dots,E_k\}$, so that the activation of $r$ requires
the joint contribution of all $k$ ERCs.  We call $r$ a \emph{synergetic reaction
generated by} $\{E_1,\dots,E_k\}$ and $\ERCs(r)$ the \emph{generated synergy},
written
$$E_1+E_2+\dots+E_k\;\to\;\ERCs(r).$$
\label{synergy_general}
\end{defi}

An important property of synergies is that every synergy between closed sets traces back to a synergy among ERCs that are exclusive to each part.

\begin{lem}
\label{lem:synergy_closed_to_ERC}
Let $X,Y\in\Csr$ be incomparable and suppose $(X,Y)$ is synergetic, witnessed by
$r\in\Rs_{X\js Y}-(\Rs_X\cup\Rs_Y)$.  Then there exist $\mathcal{S}_X\subseteq\Es$ with
$E\subseteq X$ for all $E\in \mathcal{S}_X$, and $\mathcal{S}_Y\subseteq\Es$ with $E\subseteq Y$ for all $E\in \mathcal{S}_Y$, and the pair of ERC collections $(\mathcal{S}_X,\mathcal{S}_Y)$ is synergetic also witnessed by $r$.
\end{lem}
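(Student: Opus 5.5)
The plan is to take the obvious ERC decompositions of $X$ and $Y$ and check that the pair of collections is synergetic in the sense of Definition~\ref{def:synergy_sets}, read at the level of joins.

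Since $X$ and $Y$ are reactive closed sets, the argument in Lemma~\ref{lem:ERC_cover} gives ERC decompositions. Put
$$\mathcal{S}_X=\{\ERCs(r'):r'\in\Rs_X\},\qquad \mathcal{S}_Y=\{\ERCs(r'):r'\in\Rs_Y\}.$$
Each $\ERCs(r')$ with $r'\in\Rs_X$ satisfies $\supp(r')\subseteq X$. Since $X$ is closed, $\ERCs(r')=\clos(\supp(r'))\subseteq X$, and likewise for $Y$. Reactivity gives $\bigvee_{E\in\mathcal{S}_X}E=X$ and $\bigvee_{E\in\mathcal{S}_Y}E=Y$. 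By associativity of $\js$,
$$\Bigl(\bigvee_{E\in\mathcal{S}_X}E\Bigr)\js\Bigl(\bigvee_{E\in\mathcal{S}_Y}E\Bigr)=X\js Y,$$
so $r\in\Rs_{X\js Y}$ is active in the joint closure of the two collections.

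Next I check that $r$ is not activated by either collection alone. The join of $\mathcal{S}_X$ is $X$, and $r\notin\Rs_X$ by hypothesis; symmetrically $r\notin\Rs_Y$. So $r$ requires contributions from both collections. This is exactly the statement that $(\mathcal{S}_X,\mathcal{S}_Y)$, viewed as a pair with joins $X$ and $Y$, is synergetic witnessed by $r$. If one wants the stronger reading that every ERC of the collections is needed, one can shrink: pass to irreducible subgenerators of $X$ and $Y$, which have the same joins, so the witness property is preserved.

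The main obstacle is how to read ``the pair of ERC collections is synergetic'', since Definition~\ref{synergy_general} quantifies over all proper subcollections of a flat family. If that reading is intended, the plan is as follows. Take $\mathcal{S}_X\cup\mathcal{S}_Y$ and remove ERCs one at a time while $r$ stays active in the join, until the family is inclusion-minimal with $r$ active. This family is synergetic in the sense of Definition~\ref{synergy_general}. The remaining point is to show that it still meets both $X$ and $Y$. It does: a family using only ERCs inside $X$ has join contained in $X$, hence cannot activate $r\notin\Rs_X$, and symmetrically for $Y$. The minimal family therefore splits into a nonempty part $\mathcal{S}_X$ of ERCs contained in $X$ and a nonempty part $\mathcal{S}_Y$ of ERCs contained in $Y$.

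One caveat is that ERCs contained in $X\cap Y$ could be assigned to either side. They are placed arbitrarily, say in $\mathcal{S}_X$. The exclusivity claimed in the preceding sentence of the paper then holds only in the weak sense that each part draws from ERCs of its own closed set.
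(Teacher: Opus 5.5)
Your proof is correct, under the reading of Definition~\ref{synergy_general} that the paper's own proof also targets. It runs in the opposite direction from the paper's.

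The paper works bottom-up from $\supp(r)$. For each reactant it uses reactivity to pick a reaction of the part containing that reactant and takes its ERC. In the generic binary case ($s_1\in X-Y$, $s_2\in Y-X$) this gives an explicit pair $E\subseteq X$, $E'\subseteq Y$ with $E\not\subseteq Y$ and $E'\not\subseteq X$. Reactants lying outside $X\cup Y$ are handled by a sketched induction on closure depth. For $|\supp(r)|>2$ the paper simply asserts that no proper subcollection covers $\supp(r)$.

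You work top-down instead. The full decompositions $\{\ERCs(r'):r'\in\Rs_X\}$ and $\{\ERCs(r'):r'\in\Rs_Y\}$ have joint closure $X\js Y$. You prune to a family minimal for activating $r$, and closedness of $X$ and $Y$ forces both sides to survive.

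What your route buys is uniformity. You never ask where the reactants of $r$ sit, so the case $\supp(r)\not\subseteq X\cup Y$ needs no separate induction. The proper-subcollection condition follows from single-removal minimality plus monotonicity of $\clos$; state that step explicitly. This is exactly the pruning the paper's covering claim implicitly needs: one chosen ERC may contain several reactants, and covering $\supp(r)$ is not the same as activating $r$ in the join.

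What the paper's construction buys, where it applies, is explicitness and a size bound. It uses at most one ERC per reactant, which underlies the later remark that $\rho=\max_{r\in\Rs}|\supp(r)|$ bounds the synergy degree. Your pruned family has no such bound. Greedy pruning may discard an ERC containing a reactant and keep several ERCs that only produce it through closure. If you want the bound when $\supp(r)\subseteq X\cup Y$, run your pruning on the per-reactant family rather than on the full decompositions.

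Two small remarks.

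First, your aside about passing to irreducible subgenerators does not give the stronger reading. Irreducibility is relative to generating $X$ and $Y$, not to activating $r$; it is the pruning with respect to $r$ that does the work.

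Second, your argument gives more exclusivity than you claim. The minimal family must contain an ERC not contained in $Y$, since otherwise its join lies in $Y$. That ERC must come from the $X$-decomposition, and symmetrically for $Y$. So each side has at least one ERC contained in its own part but not in the other, which also gives $k\geq 2$. Only the remaining members can lie in $X\cap Y$. That cannot be avoided in general, for instance when a reactant of $r$ lies in $X\cap Y$, and it affects the paper's $m$-ary construction equally.
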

\begin{proof}
For simplicity let $|\supp(r)|= 2$, write $\supp(r)=\{s_1,s_2\}$.  Because $r\notin\Rs_X$, $\supp(r)\not\subseteq X$; because $r\notin\Rs_Y$, $\supp(r)\not\subseteq Y$.  Hence the two species cannot both belong to $X$, nor both to $Y$.

In the generic case $s_1\in X- Y$ and $s_2\in Y- X$.  If instead one species lies outside both $X$ and $Y$---having been produced only during the iterative closure forming $X\js Y$---then it is itself the product of a reaction whose support spans $X$ and $Y$; applying the argument to that earlier reaction in the closure chain (by induction on closure depth, which is finite) reduces to the generic case.

Since $X$ is reactive and $s_1\in X$, some reaction $r_1\in\Rs_X$ satisfies $s_1\in\supp(r_1)\cup\prods(r_1)$; set $E=\ERCs(r_1)=\clos(\supp(r_1))\subseteq X$,
so $s_1\in E$.  If $E\subseteq Y$ then $s_1\in Y$, contradicting $s_1\in X- Y$; hence $E\not\subseteq Y$.  Symmetrically, pick $r_2\in\Rs_Y$ with $s_2\in\supp(r_2)\cup\prods(r_2)$ and set $E'=\ERCs(r_2)\subseteq Y$; then $E'\not\subseteq X$.

Now $\supp(r)=\{s_1,s_2\}\subseteq E\cup E'\subseteq E\js E'$, so $r\in\Rs_{E\js E'}$.  Since $s_2\notin X\supseteq E$, we have $r\notin\Rs_E$; since
$s_1\notin Y\supseteq E'$, we have $r\notin\Rs_{E'}$.  Therefore $\{E,E'\}$ is a synergetic pair for $r$.
For $|\supp(r)|=m>2$, apply the same construction to each $s_i\in\supp(r)$: since each $s_i$ lies in some part, pick $r_i$ in that part's reactions with $s_i\in\supp(r_i)\cup\prods(r_i)$ and set $E_i=\ERCs(r_i)$. Then $\supp(r)\subseteq\bigcup_i E_i\subseteq\bigvee_i E_i$, so $r\in\Rs_{\bigvee_i E_i}$, while no proper subcollection covers all of $\supp(r)$; hence $\{E_1,\dots,E_m\}$ is a synergetic collection for $r$, partitioned between $X$ and $Y$ according to where each $s_i$ lies.

\end{proof}

Every synergy between closed sets therefore decomposes into a synergy between ERCs from each closed set.  This confirms that ERCs are the correct granularity at
which synergy first arises and cannot be coarsened further. A binary synergy $E_1+E_2\to E_3$ arises when a reaction $r$ with $\supp(r)\subseteq E_1\cup E_2$ is not triggered by either $E_1$ or $E_2$ alone---the two ERCs must act jointly to expose the reaction.  A ternary synergy requires that none of the three pairwise joins triggers $r$, so all three reactant species come from three mutually distinct ERCs.  A degree-$k$ synergy requires a reaction $r$ with $|\supp(r)|\geq k$: if all reactions have at most $k{-}1$ reactants, no degree-$k$ synergy can arise.  In general $\rho=\max_{r\in\Rs}|\supp(r)|$ is the maximum number of ERCs that can be combined to produce a synergy.

Synergies were first introduced in~\cite{veloz2019existence} to lay the groundwork for understanding when combinations of closed sets create novel structure.

{\bf A notational remark}: Both synergies and generators contain many ERCs in general. We express generators as sequences, e.g. $G(X)=(E_1,...,E_k)$, or as summations of ERCs, e.g. $G(X)=E_1+\cdots+E_k$. The symbol $+$ in this work expresses combining ERCs to form generators (as an {\it appending} operator) and $\to$ expresses containment. For example, let $S_i=(E_1^i,...,E_{k_i}^i)$ for $i=1,...,l$. Then, if a synergy can be expressed as $\sum_{i=1}^l S_i\to E_{\text{syn}}$. Instead, when we want to express which closure the sets are reaching we use the $\bigvee$ operator. For example $\bigvee G(X)$ represents the generated closure of the union of all elements in the generator, i.e. $X$.\\
Although these choices can be considered abuse of notation, we believe this notation is a good choice in terms of expressivity, succinctness, and because in this way we can elicit the fact that the generative structure of ERCs resembles a sort of second-order reaction network that is hidden in the reaction network structure, and accounts for the productive novelty relations~\cite{veloz2019existence}.

\subsection{Maximal and Fundamental Synergies}
Definition~\ref{synergy_general} does not clarify what are the inner generative causes and consequences of a synergy.  For example, consider the following reaction network
\begin{equation}
r_1=a+e\to b+c,~ r_2= b+c\to 2c,~ r_3=d+a\to a+b,~ r_4=g\to f+d,~ r_5=f+d\to2d,~ r_6=b+d\to g.  
\label{RN_fun}
\end{equation}
We have
$$E_1=\{a,b,c,e\},~E_2=\{b,c\},~E_3=\{d,a,b\},~E_4=\{d,f,g\},~E_5=\{d,f\},~E_6=\{b,d,e,f,g\}$$

Note that $E_1\to E_2$ and $E_6\to E_4\to E_5$. Consider the following synergies 
\begin{equation*}
\begin{split}
\sigma_1&=E_2+E_3\to E_6,~
\sigma_2=E_2+E_3\to E_5,~
\sigma_3=E_2+E_3\to E_4,\\
\sigma_4&=E_1+E_3\to E_6,~
\sigma_5=E_1+E_3\to E_5,~
\sigma_6=E_1+E_3\to E_4.
\end{split}
\end{equation*}

Since $E_6\to E_4\to E_5$ we have that $\sigma_1$ and $\sigma_4$ generate a larger ERC ($E_6$) than $\sigma_i$ and $\sigma_j$ with $i=2,3$ and $j=5,6$ respectively. This defines a maximality criteria of a generated synergy. Moreover, since $E_1\to E_2$, generating $E_6$ requires smaller ERCs in $\sigma_1$ than in $\sigma_4$. This defines the minimality criteria for producers for a synergy. 

\begin{defi}
A synergy $E + E' \to E_{syn}$ is called a maximal synergy of $(E,E')$ if and only if there is no other synergy $E + E' \to E_{syn}'$ such that $E_{syn}'\to  E_{syn}$. We denote it by $E+E'\xrightarrow{max}E_{syn}$ and we define the set of maximal synergies of $(E,E')$ as $\Emax(E,E')$.
\label{def:maximal_synergy}
\end{defi}

Note that if a pair of ERC $(E,E')$ produces one synergy or a sequence of contained synergies (e.g. $E_2$ and $E_3$), then there is a single maximal synergy ($E_6$).  In general, $\Emax(E,E')$ defines an antichain (set of incomparable elements), as none of them can contain another, and each element of such antichain might contain one or many non-maximal synergies generated by $(E,E')$. Therefore, maximal synergies are the only synergies that matter from a generative sense.  

Adding the minimality of producers criteria for a maximal synergy we characterize fundamentality:
Different synergetic reactions among the same $k$-tuple may generate ERCs at different heights of the hierarchy.  This motivates maximality.

\begin{defi}
A degree-$k$ synergy $\{E_1,\dots,E_k\}\to E_{syn}$ is \emph{maximal} if no other synergy of the same collection produces a strictly larger ERC:
$$\nexists\,E_{syn}'\text{ with }\{E_1,\dots,E_k\}\to E_{syn}'\text{ and }E_{syn}\subsetneq E_{syn}'.$$
We write $E_1+\cdots+E_k\xrightarrow{max}E_{syn}$.
\label{def:maximal_synergy}
\end{defi}

The maximal synergies of a $k$-tuple form an antichain in $(\Es,\subseteq)$ and are the only ones that matter generatively.
Fundamentality adds minimality in each producer slot.

\begin{defi}
A maximal degree-$k$ synergy $E_1+\cdots+E_k\xrightarrow{max}E_{syn}$ is \emph{fundamental} if for every index $i$ there is no $E_i'\subsetneq E_i$ with $E_i'\in\Es$ such that
$$E_1+\cdots+E_{i-1}+E_i'+E_{i+1}+\cdots+E_k\;\xrightarrow{max}\;E_{syn}.$$
We denote a fundamental degree-$k$ synergy by $E_1+\cdots+E_k\twoheadrightarrow E_{syn}$.
\label{fundamental_ERC_synergy}
\end{defi}

The following result shows that all synergies reduce to fundamental ones through the hierarchy.

\begin{lem}
\label{lem:synergy_reduction}
Every degree-$k$ maximal synergy $E_1+\cdots+E_k\xrightarrow{max}E_{syn}$ admits a descent to a fundamental degree-$k$ synergy $\hat E_1+\cdots+\hat E_k\twoheadrightarrow E_{syn}$ with $\hat E_i\subseteq E_i$ for each $i$.  The original ERCs are recovered by saturated containment chains $E_i=F_0^{(i)}\to F_1^{(i)}\to\cdots\to\hat E_i$ in $(\Es,\to)$.
\end{lem}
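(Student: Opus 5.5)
The proof will be a descent argument on the finite poset $(\Es,\subseteq)$, where each step replaces one producer by a strictly smaller ERC. Given a maximal synergy $E_1+\cdots+E_k\xrightarrow{max}E_{syn}$, I consider the set $\mathcal{D}$ of all $k$-tuples $(F_1,\dots,F_k)\in\Es^k$ that satisfy two conditions: $F_i\subseteq E_i$ for every $i$, and $F_1+\cdots+F_k\xrightarrow{max}E_{syn}$. This set is nonempty because it contains $(E_1,\dots,E_k)$. It is finite because $\Es$ is finite. I order it componentwise by inclusion and pick a tuple $(\hat E_1,\dots,\hat E_k)$ that is minimal in this order; one way to obtain it is to minimize $\sum_i|F_i|$.

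Next I check that $(\hat E_1,\dots,\hat E_k)$ is fundamental in the sense of Definition~\ref{fundamental_ERC_synergy}. Suppose for some index $i$ there is an ERC $E_i'\subsetneq\hat E_i$ such that replacing $\hat E_i$ by $E_i'$ still gives a maximal synergy onto $E_{syn}$. Since $E_i'\subsetneq\hat E_i\subseteq E_i$, the new tuple lies in $\mathcal{D}$ and is strictly below $(\hat E_1,\dots,\hat E_k)$. This contradicts minimality, so the tuple is fundamental and we may write $\hat E_1+\cdots+\hat E_k\twoheadrightarrow E_{syn}$. The inclusions $\hat E_i\subseteq E_i$ hold because every tuple in $\mathcal{D}$ satisfies them. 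I would also record that the $k$ producers remain pairwise distinct and that the degree stays $k$: both are built into the definition of a degree-$k$ synergy, which requires every proper subcollection to fail to trigger the witnessing reaction.

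For the chains, take each $i$ with $\hat E_i\subseteq E_i$. If $\hat E_i=E_i$ the chain is trivial. Otherwise the interval $\{F\in\Es:\hat E_i\subseteq F\subseteq E_i\}$ is a finite poset. Any maximal chain in this interval, $E_i=F_0^{(i)}\supsetneq F_1^{(i)}\supsetneq\cdots\supsetneq\hat E_i$, has the property that each consecutive step admits no ERC strictly between its two ends. By the definition of direct containment, each step is therefore an arrow $F_j^{(i)}\to F_{j+1}^{(i)}$ in $(\Es,\to)$, which gives the required saturated chain. The claim that $(\Es,\to)$ is a forest is not needed here; only the existence of saturated chains in a finite poset is used.

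The main obstacle is making sure the descent stays inside the class of \emph{maximal} synergies onto the \emph{same} $E_{syn}$. Definition~\ref{fundamental_ERC_synergy} only tests replacements that again give $\xrightarrow{max}E_{syn}$, so I build that requirement into $\mathcal{D}$ rather than trying to prove that shrinking a producer preserves maximality. Shrinking a producer can remove competing synergies, and that could make a non-maximal target become maximal; restricting to $\mathcal{D}$ avoids the issue. The one remaining point is whether a single index at a time suffices. Definition~\ref{fundamental_ERC_synergy} only asks about single-index replacements, and those are exactly what minimality in the componentwise order rules out, so there is no gap.
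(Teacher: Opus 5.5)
Your proof is correct and is the same finite-descent argument as the paper's: the paper repeatedly replaces a non-fundamental slot by a strictly smaller ERC that still gives $\xrightarrow{max}E_{syn}$, and choosing a componentwise-minimal tuple in your set $\mathcal{D}$ is a cleaner way of packaging that termination. Your construction of the saturated chains as maximal chains in the interval $\{F\in\Es:\hat E_i\subseteq F\subseteq E_i\}$ supplies the justification the paper leaves to a reference to Lemma~\ref{lem:containment_out_ERC}.
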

\begin{proof}
If the synergy is not fundamental at slot $i$, replace $E_i$ by a strict subset $E_i'\subsetneq E_i$ that still produces the same maximal synergy to $E_{syn}$.  The finite descent in $(\Es,\subseteq)$ terminates at the fundamental synergy; the containment chains re-derive each original $E_i$ from $\hat E_i$ as in the proof of Lemma~\ref{lem:containment_out_ERC}.
\end{proof}

Lemmas~\ref{lem:synergy_closed_to_ERC} and~\ref{lem:synergy_reduction} together constitute a \emph{synergy descent principle}: every maximal synergy between can be turned into a fundamental synergy that descend from from each part of the maximal one. 

\section{Complementarity}
\label{Complementarity}

Complementarity is the second form of productive novelty, and the counterpart on the output side: a combination need not fire any new reaction, yet a species that one ERC requires may be produced by another, so the combination requires fewer species than its parts demanded apart. 
\subsection{The Requirement of a Combination}
\label{ReqCombination}

Complementarity rests on one crucial fact: combining reactive closed sets cannot
increase the required species for semi-self-maintainance.

\begin{lem}
Let $X,Y\in\Csr_{con}$.  Then $\reqs(X\js Y)\subseteq\reqs(X)\cup\reqs(Y)$.
\label{req_containment}
\end{lem}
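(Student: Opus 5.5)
The plan is a direct element-chasing argument. It rests on two facts: reaction sets are monotone under the join, and the hypothesis $X,Y\in\Csr_{con}$ (in particular, reactivity) lets us locate where a required species of the join must be consumed. Fix $s\in\reqs(X\js Y)$, so that $s\in\supp(\Rs_{X\js Y})$ but $s\notin\prods(\Rs_{X\js Y})$. The goal is to show that $s\in\reqs(X)$ or $s\in\reqs(Y)$.

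\emph{Step 1 (monotonicity).} Since $X\subseteq X\js Y$, every reaction whose support lies in $X$ also has its support in $X\js Y$, so $\Rs_X\subseteq\Rs_{X\js Y}$, and likewise $\Rs_Y\subseteq\Rs_{X\js Y}$. Hence $\prods(\Rs_X)\cup\prods(\Rs_Y)\subseteq\prods(\Rs_{X\js Y})$. Because $s$ is not produced inside the join, $s\notin\prods(\Rs_X)$ and $s\notin\prods(\Rs_Y)$.

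\emph{Step 2 (locating $s$).} The species $s$ lies in $X\js Y$, which is obtained from $X\cup Y$ by the iterative closure procedure described after Definition~\ref{def:genclos}. Every species that procedure adds is a product of some reaction in $\Rs_{X\js Y}$. Since $s$ is not such a product, it was never added, and so $s\in X\cup Y$. Without loss of generality, take $s\in X$.

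\emph{Step 3 (reactivity; the main obstacle).} The difficulty is that the reaction consuming $s$ in the join might be a synergetic reaction lying outside $\Rs_X\cup\Rs_Y$. In that case the fact that $s$ is consumed is not directly inherited from either part, and $s\in\supp(\Rs_X)$ is not immediate. I would resolve this with Definition~\ref{def:reac_set}. Because $X$ is reactive and $s\in X$, there is a reaction $r\in\Rs_{X\cup\{s\}}=\Rs_X$ with $s\in\supp(r)\cup\prods(r)$. By Step 1, $s\notin\prods(r)$, so $s\in\supp(r)\subseteq\supp(\Rs_X)$. Combining this with $s\notin\prods(\Rs_X)$ gives $s\in\reqs(X)$. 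The symmetric case $s\in Y$ gives $s\in\reqs(Y)$, which proves the inclusion.

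Only reactivity of $X$ and $Y$ is used here; connectedness is not needed. I would note this, since it indicates that the lemma holds on all of $\Csr$.
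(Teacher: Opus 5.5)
Your proof is correct and follows the paper's argument essentially step for step. Monotonicity $\prods(\Rs_X)\subseteq\prods(\Rs_{X\js Y})$ rules out $s$ being produced in either part, the closure only adds produced species so $s\in X\cup Y$, and reactivity of that part places $s$ in $\supp(\Rs_X)$, hence in $\reqs(X)$; your remark that only reactivity (not connectedness) is used applies equally to the paper's proof.
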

\begin{proof}
Let $w\in\reqs(X\js Y)$, so $w\notin\prods(\Rs_{X\js Y})$.  Since
$X\js Y=\clos(X\cup Y)$ and the closure adds only produced species,
$w\notin\prods(\Rs_{X\js Y})$ forces $w\in X\cup Y$; say $w\in X$.  As $X$ is
reactive, $w\in\supp(\Rs_X)\cup\prods(\Rs_X)$, and $w\notin\prods(\Rs_X)$ because
$\prods(\Rs_X)\subseteq\prods(\Rs_{X\js Y})$.  Hence
$w\in\supp(\Rs_X)-\prods(\Rs_X)=\reqs(X)$.
\end{proof}

We say that $X$ and $Y$ are \emph{complementary} (at the closed-set level) if the
combination strictly reduces the set of required species, i.e.\
$\reqs(X\js Y)\subsetneq\reqs(X)\cup\reqs(Y)$.  By
Lemma~\ref{req_containment} this is equivalent to requiring that one part produces a
species the other requires:
$$\prods(\Rs_X)\cap\reqs(Y)\neq\emptyset
  \quad\text{or}\quad
  \prods(\Rs_Y)\cap\reqs(X)\neq\emptyset.$$

At the level of individual reaction closures, complementarity takes the following form.

\begin{defi}
For $E,E'\in\Es$, the \emph{supply} from $E$ to $E'$ is
$\supl{E}{E'}=\prods(\Rs_E)\cap\reqs(E')$.
\label{def:supply}
\end{defi}

\begin{defi}
Incomparable $E,E'\in\Es$ are \emph{complementary} if
$\supl{E}{E'}\cup\supl{E'}{E}\neq\emptyset$.  A species $s$ is complementary in
$E\js E'$ if $s\in\supl{E}{E'}\cup\supl{E'}{E}$.
\label{complementary}
\end{defi}

Every complementarity between closed sets traces back to complementary ERCs, where the
producing ERC is necessarily exclusive to the producing part.

\begin{lem}
\label{lem:complementarity_closed_to_ERC}
Let $X,Y\in\Csr_{con}$ be incomparable and suppose $s\in\prods(\Rs_X)\cap\reqs(Y)$.
Then there exist $E\in\Es$ with $E\subseteq X$, $E\not\subseteq Y$, and $E'\in\Es$
with $E'\subseteq Y$, such that $s\in\supl{E}{E'}$.
\end{lem}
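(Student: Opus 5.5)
The plan is a direct construction: pick $E$ from a reaction of $X$ that produces $s$, and pick $E'$ from a reaction of $Y$ that consumes $s$. Then check the required memberships using only closedness of $X$ and $Y$ and the monotonicity $Z\subseteq W\Rightarrow\Rs_Z\subseteq\Rs_W$, which is immediate from the definition of $\Rs_Z$. No earlier lemma beyond the definitions should be needed, although the argument mirrors the construction in Lemma~\ref{lem:synergy_closed_to_ERC}.

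\textbf{Producing side.} Since $s\in\prods(\Rs_X)$, there is $r\in\Rs_X$ with $s\in\prods(r)$. Set $E=\ERCs(r)=\clos(\supp(r))$. Because $\supp(r)\subseteq X$ and $X$ is closed, minimality of the closure gives $E\subseteq X$. Moreover $r\in\Rs_E$, so $s\in\prods(\Rs_E)$.

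The key point is $E\not\subseteq Y$. Suppose instead $E\subseteq Y$. Then $\supp(r)\subseteq Y$, so $r\in\Rs_Y$ and hence $s\in\prods(\Rs_Y)$. This contradicts $s\in\reqs(Y)=\supp(\Rs_Y)-\prods(\Rs_Y)$. The choice of $E$ is exactly what forces it to be exclusive to $X$.

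\textbf{Requiring side.} Since $s\in\supp(\Rs_Y)$, there is $r'\in\Rs_Y$ with $s\in\supp(r')$. Set $E'=\ERCs(r')$. As before, $E'\subseteq Y$ because $Y$ is closed. Then $r'\in\Rs_{E'}$, so $s\in\supp(\Rs_{E'})$. Also $\Rs_{E'}\subseteq\Rs_Y$, so $s\notin\prods(\Rs_{E'})$, since otherwise $s\in\prods(\Rs_Y)$. Hence $s\in\reqs(E')$, and together with the producing side, $s\in\prods(\Rs_E)\cap\reqs(E')=\supl{E}{E'}$.

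The only step needing care is $E\not\subseteq Y$, which is the ``exclusive to the producing part'' clause. It follows from the contradiction above, so I expect no real obstacle. I would add a remark that the statement does not claim $E$ and $E'$ are incomparable, so Definition~\ref{complementary} need not be verified. If incomparability were wanted, $E'\subseteq E$ is excluded because it would give $s\in\prods(\Rs_E)$ and $\Rs_{E'}\subseteq\Rs_E$, but the case $E\subseteq E'$ would need a separate argument or a different choice of $r'$.
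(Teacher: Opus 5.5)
Your proof is correct and is essentially the paper's own argument: you make the same choice of an $s$-producing reaction in $X$ and an $s$-consuming reaction in $Y$, reach $E\not\subseteq Y$ by the same contradiction, and use the same inheritance $\Rs_{E'}\subseteq\Rs_Y$ to get $s\in\reqs(E')$. Only your closing aside on incomparability is reversed, and it does not affect the lemma, which does not ask for incomparability: $E\subseteq E'$ is the impossible case (it would give $E\subseteq Y$, or equivalently $s\in\prods(\Rs_{E'})$), whereas $E'\subsetneq E$ can genuinely occur --- e.g.\ with reactions $a\to s$, $s\to b$, $c\to b$, $X=\{a,s,b\}$ and $Y=\{s,b,c\}$, the only admissible pair is $E=\{a,s,b\}\supsetneq E'=\{s,b\}$.
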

\begin{proof}
Since $s\in\prods(\Rs_X)$, there exists $r_X\in\Rs_X$ with $s\in\prods(r_X)$; set
$E=\ERCs(r_X)=\clos(\supp(r_X))\subseteq X$, so $s\in\prods(\Rs_E)$.  If $E\subseteq Y$
then $\prods(\Rs_E)\subseteq\prods(\Rs_Y)$, giving $s\in\prods(\Rs_Y)$, contradicting
$s\in\reqs(Y)$; hence $E\not\subseteq Y$.

Since $s\in\reqs(Y)$, there exists $r_Y\in\Rs_Y$ with $s\in\supp(r_Y)$; set
$E'=\ERCs(r_Y)\subseteq Y$, so $s\in E'$.  Because $\Rs_{E'}\subseteq\Rs_Y$ and
$s\notin\prods(\Rs_Y)$, we have $s\notin\prods(\Rs_{E'})$.  Combined with
$s\in\supp(\Rs_{E'})$, this gives $s\in\reqs(E')$.  Therefore
$s\in\prods(\Rs_E)\cap\reqs(E')=\supl{E}{E'}$.
\end{proof}

Complementarity at the closed-set level is therefore always witnessed by ERCs non-common to the producing part and ERCs internal to the consuming part, with the asymmetry between production and requirement being the irreducible source of the phenomenon.

Note that Definition~\ref{complementary} assumes the two ERCs are incomparable.  This is because the only way in which the complement can happen is top-down, and this is already encoded in the containment relationship.  Complementarity formalizes that a combination of ERCs belonging to distinct chains in the ERC hierarchy will reduce one of the species with respect to the requirements existing pre-combination.

\subsection{Pure, Minimal and Fundamental Complementarities}
\label{FundComplementarity}

Lemma~\ref{req_containment} shows that a requirement can only be removed. Thus, a combination produces a required species in exactly one of two ways: one part already produces what another requires, or a synergy enables its production. However, note that two ERCs can be either synergetic, complementary, both, or none.  Consider the reaction network in Eq.~\eqref{RN_fun}. Note that $E_5$ and $E_3$ are complementary as $\supl{E_5}{E_3}=\{d\}$ and they have no synergy, while $E_3$ and $E_4$ are both synergetic (e.g. $E_3+E_5\to E_6$) and complementary ($\supl{E_5}{E_3}=\{d\}$.  

Therefore, complementarity can be found as a direct reduction of requirements in a combination, or via synergies whose subsequent addition of products reduces the combination's requirements. For example, note that  $\reqs(E_4)=\{g\}$, and consider $(E_1,E_4)$, a non-complementary pair whose synergy produces $g$. Complementarities obtained through synergies can only be known once synergies are computed, so we focus on the complementarities we can know directly from the ERC requirements and productions. 
\begin{defi}
Two ERCs $E,E'$ are purely complementary if they are complementary and not synergetic.
\end{defi}

Similar to synergy, complementarity can also be studied along the ERC hierarchy. Note that $\supl{E_i}{E_5}=\{f\}$ and hence $E_i$ is complementary to $E_5$ for $i=4,6$. However, note that the reason why $f$ is supplied to $E_5$ is that $r_4$ produces $f$ and since $E_6\to E_4$, then $E_6$ inherits the ability to supply $f$ to $E_5$ from $E_4$.  We formalize this in what follows. 

\begin{lem}
Let $E_0\subseteq E_1$ and $E'\in\Es$. Then $\supl{E_0}{E'}\subseteq\supl{E_1}{E'}$.
\label{supply_monotone}
\end{lem}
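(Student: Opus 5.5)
The plan is to unfold Definition~\ref{def:supply} and reduce the claim to monotonicity of the map $X\mapsto\prods(\Rs_X)$ under inclusion. By definition,
$$\supl{E_0}{E'}=\prods(\Rs_{E_0})\cap\reqs(E'),\qquad \supl{E_1}{E'}=\prods(\Rs_{E_1})\cap\reqs(E').$$
The factor $\reqs(E')$ is identical on both sides. It therefore suffices to show $\prods(\Rs_{E_0})\subseteq\prods(\Rs_{E_1})$, because intersecting both sides of this inclusion with $\reqs(E')$ preserves it.

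The key steps, in order, are as follows.

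\begin{enumerate}
\item Show that $\Rs_{E_0}\subseteq\Rs_{E_1}$. If $r\in\Rs_{E_0}$, then $\supp(r)\subseteq E_0\subseteq E_1$, so $r\in\Rs_{E_1}$ by the definition of the activable reactions $\Rs_X$.
\item Conclude the product inclusion. Since $\prods(\Rs_X)=\bigcup_{r\in\Rs_X}\prods(r)$ is a union indexed by $\Rs_X$, enlarging the index set can only enlarge the union, so $\prods(\Rs_{E_0})\subseteq\prods(\Rs_{E_1})$.
\item Intersect with $\reqs(E')$ to obtain $\supl{E_0}{E'}\subseteq\supl{E_1}{E'}$.
\end{enumerate}

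The argument never uses that $E_0,E_1$ are ERCs or closed, only the inclusion $E_0\subseteq E_1$. The only subtlety I anticipate is a notational point rather than a real obstacle. The statement is phrased with $\subseteq$, while the motivating example uses the hierarchy arrow $E_1\to E_0$ (direct containment). Since direct containment implies inclusion, the lemma covers that case, and it extends along chains in $(\Es,\to)$ by transitivity of $\subseteq$. I would add a remark noting that this justifies the ``inheritance'' of supply described for $E_6\to E_4$ supplying $f$ to $E_5$.
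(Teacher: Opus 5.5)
Your proof is correct and is essentially the paper's own argument: $E_0\subseteq E_1$ gives $\Rs_{E_0}\subseteq\Rs_{E_1}$, hence $\prods(\Rs_{E_0})\subseteq\prods(\Rs_{E_1})$, and intersecting with the fixed set $\reqs(E')$ yields the inclusion. Your observation that only $E_0\subseteq E_1$ is used, and not the ERC or closedness properties, is also accurate.
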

\begin{proof}
$E_0\subseteq E_1$ gives $\Rs_{E_0}\subseteq\Rs_{E_1}$, hence
$\prods(\Rs_{E_0})\subseteq\prods(\Rs_{E_1})$; intersecting with $\reqs(E')$ gives the
inclusion.
\end{proof}
 
A larger ERC activates more reactions, so its production only grows. Hence, the ERCs producing a fixed species are closed upward, and along each branch of the hierarchy there is a lowest ERC at which the species first appears as a product, and a lowest at which it is first required.
 
\begin{defi}
For $s\in\Ms$, the \emph{minimal producers} and \emph{minimal consumers} of $s$ are
$$\minprod(s)=\min{}_{\subseteq}\{E\in\Es:s\in\prods(\Rs_E)\},\qquad  \mincons(s)=\min{}_{\subseteq}\{E\in\Es:s\in\reqs(E)\}.$$
\label{def:minprodcons}
\end{defi}
 
As with the maximal synergies, $\minprod(s)$ and $\mincons(s)$ are antichains of ERCs drawn from the closures of the reactions touching $s$.
 
Supply is monotone (Lemma~\ref{supply_monotone}): once an ERC produces $s$ so does every ERC above it, so the producers of a fixed species are closed upward
and $\minprod(s)$ collects the lowest of them. Requirement enjoys no such monotonicity, but the consumers of $s$ likewise have lowest representatives $\mincons(s)$. These minimal representatives are the canonical carriers of a complementary species, and minimising \emph{both} sides yields the fundamental
relation --- the complementary analogue of the fundamental synergy (Definition~\ref{fundamental_ERC_synergy}), which traces a maximal synergy to its
minimal reactant ERCs.

\begin{defi}
\label{def:fundamental_complementarity}
Let $s\in\supl{E}{E'}$ be a complementary species, so $E$ produces $s$ and $E'$ requires it. The complementarity is \emph{fundamental} if both sides are minimal,
$E\in\minprod(s)$ and $E'\in\mincons(s)$; we write $E\fcomp{s}E'$. (Minimising only the producing side, $E\in\minprod(s)$, gives a \emph{minimal productive
complement} of $E'$ on $s$ --- the producer-side half of the relation.)
\end{defi}
 
A fundamental complement carries one or many complementary species from the smallest ERC that produces it to the smallest that requires it in the ERC hierarchy. 
 
\begin{lem}
\label{lem:complementarity_reduction}
Let $E,E'$ be a complementary pair and $X=E\js E'$. Then
\begin{enumerate}
\item for each complementary species
$s\in\supl{E}{E'}$ there is a fundamental complementarity
$\hat E_s\fcomp{s}\hat E_s'$ with $\hat E_s\in\minprod(s)$, $\hat E_s\subseteq E$
and $\hat E_s'\in\mincons(s)$, $\hat E_s'\subseteq E'$ (symmetrically for
$s\in\supl{E'}{E}$), and
\item for each core there are saturated chains of
direct containments $E\to\cdots\to\hat E_s$ and $E'\to\cdots\to\hat E_s'$ in
$(\Es,\to)$, so that the join of the fundamental cores, extended along these
containment chains, recovers the original pair:
$$\clos\!\Big(\bigcup_s(\hat E_s\cup\hat E_s')\Big)\js E\js E'=E\js E'=X.$$
That is, the complementarity of $(E,E')$ is fully witnessed by the fundamental
cores $\hat E_s,\hat E_s'$ together with containment relations already present in the hierarchy.
These chains always exist because every core is below $E$ or $E'$ in the finite
partial order $(\Es,\subseteq)$.
\end{enumerate}
\end{lem}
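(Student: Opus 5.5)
The plan is to prove both parts directly from the finiteness of $(\Es,\subseteq)$, working inside the down-sets of $E$ and $E'$. The two points to get right are that a minimal element of a restricted family is still globally minimal, and that supply is monotone (Lemma~\ref{supply_monotone}) while requirement is not.

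\textbf{Step 1: the producer side.} Fix a complementary species $s\in\supl{E}{E'}$, so $s\in\prods(\Rs_E)$ and $s\in\reqs(E')$. Consider the finite, nonempty family
$$P_s(E)=\{F\in\Es: F\subseteq E,\ s\in\prods(\Rs_F)\},$$
which contains $E$. I would take $\hat E_s$ to be a $\subseteq$-minimal element of $P_s(E)$. To see that $\hat E_s\in\minprod(s)$, suppose some $F''\in\Es$ with $F''\subsetneq\hat E_s$ produces $s$. Then $F''\subseteq E$, so $F''\in P_s(E)$, which contradicts the minimality of $\hat E_s$. Hence $\hat E_s$ is minimal among all producers of $s$.

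\textbf{Step 2: the consumer side.} I would do the same with
$$C_s(E')=\{F\in\Es: F\subseteq E',\ s\in\reqs(F)\},$$
which is nonempty because it contains $E'$. Let $\hat E_s'$ be a minimal element of $C_s(E')$; the same down-set argument gives $\hat E_s'\in\mincons(s)$. This is where the non-monotonicity of $\reqs$ could cause trouble, and it is the step I expect to be the main obstacle. We cannot argue that a smaller ERC inherits the requirement of $s$. We only select among ERCs that actually require $s$, and minimality within the down-set of $E'$ transfers to global minimality because the down-set is closed downward. Since $s\in\prods(\Rs_{\hat E_s})\cap\reqs(\hat E_s')=\supl{\hat E_s}{\hat E_s'}$, Definition~\ref{def:fundamental_complementarity} gives $\hat E_s\fcomp{s}\hat E_s'$. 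The case $s\in\supl{E'}{E}$ is symmetric, with the roles of $E$ and $E'$ exchanged.

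\textbf{Step 3: part 2.} In the finite poset $(\Es,\subseteq)$, any comparable pair $\hat E_s\subseteq E$ can be linked by a maximal (saturated) chain. Each consecutive step in such a chain is a direct containment, because otherwise an intermediate ERC could be inserted. This yields the chains $E\to\cdots\to\hat E_s$ and $E'\to\cdots\to\hat E_s'$ in $(\Es,\to)$; when $\hat E_s=E$ the chain is trivial.

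\textbf{Step 4: the join identity.} Every core lies in $E\cup E'\subseteq X$. Since $X$ is closed, $\clos\big(\bigcup_s(\hat E_s\cup\hat E_s')\big)\subseteq X$. Joining with $E$ and $E'$ then gives exactly $X=E\js E'$, in the same way as the absorption argument of Lemma~\ref{lem:containment_out_ERC}.
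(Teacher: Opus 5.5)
Your proof is correct and follows the same route as the paper: you take $\subseteq$-minimal producers and consumers of $s$ inside the down-sets of $E$ and $E'$, use finiteness of $(\Es,\subseteq)$ for the saturated chains, and use closedness of $X$ for the join identity. The only difference is cosmetic: you seed the descent at $E$ and $E'$ themselves and spell out why minimality within a down-set is global minimality, whereas the paper seeds it at $\ERCs(r)$ and $\ERCs(r')$ for a producing and a consuming reaction (and so has to note separately that $\ERCs(r')$ cannot produce $s$), leaving the down-set step implicit.
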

\begin{proof}
\emph{(1)} For $s\in\supl{E}{E'}$ some $r\in\Rs_E$ produces $s$ with
$\ERCs(r)\subseteq E$, giving a minimal producer $\hat E_s\in\minprod(s)$ below
$E$; some $r'\in\Rs_{E'}$ consumes $s$ with $\ERCs(r')\subseteq E'$, and
$\ERCs(r')$ cannot produce $s$ (else $E'$ would, contradicting $s\in\reqs(E')$),
giving $\hat E_s'\in\mincons(s)$ below $E'$. Hence $\hat E_s\fcomp{s}\hat E_s'$.\\

\emph{(2)} Each core lies below $E$ or $E'$, so finite cover-chains exist and the
join telescopes up to $E$ and $E'$ as in
Lemma~\ref{lem:synergy_reduction}; thus the whole join is $E\js E'=X$.
\end{proof}

Lemmas~\ref{lem:complementarity_closed_to_ERC} and~\ref{lem:complementarity_reduction} establish the \emph{complementarity descent principle}: every complementary supply between closed sets traces to a fundamental complementarity between minimal producer and consumer ERCs. 

\section{Productive Novelty: The Generative Structure of Persistence}
\label{ProductiveNovelty}

The descent principles of Sections~\ref{Synergy} and~\ref{Complementarity} can be summarized into an integrative result.

\begin{lem}[Fundamental Witness]
\label{lem:fund_witness}
Let $X, Y \in \Csr_{con}$ be incomparable.
\begin{enumerate}
\item If $(X,Y)$ is synergetic, there exist collections
  $S_1 \subseteq \Es$ with $E \subseteq X$ for all $E \in S_1$, and
  $S_2 \subseteq \Es$ with $E \subseteq Y$ for all $E \in S_2$,
  such that $S_1 + S_2 \twoheadrightarrow E_{syn}$ for some $E_{syn} \in \Es$.
\item If $(X,Y)$ is complementary, there exist $E \in \Es$ with $E \subseteq X$,
  $E \not\subseteq Y$, and $E' \in \Es$ with $E' \subseteq Y$ such that
  $E \fcomp{s} E'$ for some $s$.
\end{enumerate}
\end{lem}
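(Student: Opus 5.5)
The plan is to treat the two items separately, since each amounts to composing the matching descent principle already established: synergy for item 1 and complementarity for item 2.

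For item 1, let $(X,Y)$ be synergetic, witnessed by $r\in\Rs_{X\js Y}-(\Rs_X\cup\Rs_Y)$. Lemma~\ref{lem:synergy_closed_to_ERC} gives collections $\mathcal{S}_X$ of ERCs contained in $X$ and $\mathcal{S}_Y$ of ERCs contained in $Y$ whose union is synergetic for $r$. This yields a synergy $\mathcal{S}_X+\mathcal{S}_Y\to\ERCs(r)$ in the sense of Definition~\ref{synergy_general}.

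Next I upgrade this synergy to a maximal one. The collection $\mathcal{S}_X\cup\mathcal{S}_Y$ generates finitely many synergies, ordered by inclusion of the generated ERC. I therefore pick one, $E_{syn}\supseteq\ERCs(r)$, that is maximal in $(\Es,\subseteq)$ among those generated by the same collection. Then Lemma~\ref{lem:synergy_reduction} gives a fundamental synergy $\hat E_1+\cdots+\hat E_k\twoheadrightarrow E_{syn}$ with $\hat E_i\subseteq E_i$ for each slot. Each $\hat E_i$ lies inside the same part as the $E_i$ it replaces. Setting $S_1=\{\hat E_i: E_i\in\mathcal{S}_X\}$ and $S_2=\{\hat E_i:E_i\in\mathcal{S}_Y\}$ then gives $E\subseteq X$ for all $E\in S_1$ and $E\subseteq Y$ for all $E\in S_2$.

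The main obstacle is bookkeeping, in two places. First, the collection from Lemma~\ref{lem:synergy_closed_to_ERC} might contain repeated ERCs, or ERCs comparable to one another. Before applying the descent I would remove duplicates and invoke Lemma~\ref{lem:containment_out_ERC}. Second, the descent must keep the degree-$k$ synergy condition, meaning no proper subcollection triggers the maximal reaction. Lemma~\ref{lem:synergy_reduction} already asserts this, so I take it as given. Only the existence of a fundamental synergy is claimed; its target may be larger than $\ERCs(r)$, and that is permitted.

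For item 2, complementarity of $(X,Y)$ means, after swapping the roles of $X$ and $Y$ if needed, that there is some $s\in\prods(\Rs_X)\cap\reqs(Y)$. Lemma~\ref{lem:complementarity_closed_to_ERC} then gives $E_0\subseteq X$ with $E_0\not\subseteq Y$, and $E'_0\subseteq Y$, such that $s\in\supl{E_0}{E'_0}$. Applying part (1) of Lemma~\ref{lem:complementarity_reduction} to this pair gives $\hat E\in\minprod(s)$ with $\hat E\subseteq E_0\subseteq X$, and $\hat E'\in\mincons(s)$ with $\hat E'\subseteq E'_0\subseteq Y$, so that $\hat E\fcomp{s}\hat E'$.

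It remains to check $\hat E\not\subseteq Y$. If instead $\hat E\subseteq Y$, then $s\in\prods(\Rs_{\hat E})\subseteq\prods(\Rs_Y)$, which contradicts $s\in\reqs(Y)$. Hence $E=\hat E$ and $E'=\hat E'$ satisfy item 2.

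Lemma~\ref{lem:complementarity_reduction} is stated for complementary pairs, which are assumed incomparable. I would note that its proof of part (1) uses only $s\in\supl{E_0}{E'_0}$, so incomparability of $E_0$ and $E'_0$ is not actually needed here.
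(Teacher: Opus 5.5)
Your proposal is correct and follows the paper's own proof: item 1 chains Lemma~\ref{lem:synergy_closed_to_ERC} with Lemma~\ref{lem:synergy_reduction}, and item 2 chains Lemma~\ref{lem:complementarity_closed_to_ERC} with Lemma~\ref{lem:complementarity_reduction}. Your extra bookkeeping is sound and supplies details the paper leaves implicit: you first pass to a maximal synergy of the collection so that Lemma~\ref{lem:synergy_reduction} applies, and after the descent you check $\hat E\not\subseteq Y$ using $s\in\reqs(Y)$.
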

\begin{proof}
(1) Apply Lemma~\ref{lem:synergy_closed_to_ERC} to obtain ERC collections
from each part witnessing the synergy; Lemma~\ref{lem:synergy_reduction}
then descends each producer slot to a fundamental synergy, preserving the
partition into $S_1 \subseteq X$ and $S_2 \subseteq Y$.
(2) Apply Lemma~\ref{lem:complementarity_closed_to_ERC} to obtain $E \subseteq X$,
$E' \subseteq Y$ with $s \in \supl{E}{E'}$; Lemma~\ref{lem:complementarity_reduction}
descends to the fundamental complementarity.
\end{proof}

We now make use of this result to revisit the notion of generator, and apply it to the construction of persistent modules.

\subsection{Fundamental Generators}
\label{FundamentalGenerators}

\begin{defi}[Fundamental extension]
\label{def:fund_extension}
Let $X_p \in \Csr_{con}$ and let $S \subseteq \Es$ be non-empty with
$E \not\subseteq X_p$ for every $E \in S$, with $\clos(S)=\bigvee_{E\in S} E$ .  The extension $X_p + S$ is
\emph{fundamental} if one of the following holds:
\begin{enumerate}
\item \emph{(Synergetic $|S|\geq 1$)}\; $\Rs_{X_p \js \clos(S)}
  \supsetneq \Rs_{X_p} \cup \bigcup_{E \in S}\Rs_E$,
  witnessed by a fundamental synergy
 $\sum_{i} E_i \twoheadrightarrow E_{syn}$
 with every $E_i \in S$ or $E_i \subseteq X_p$; or
\item \emph{(Complementary, $|S|=1$)}\; $\Rs_{X_p \js E} = \Rs_{X_p} \cup \Rs_E$
  and $(X_p, E)$ is complementary, witnessed by a fundamental complementarity
  $\hat{E} \fcomp{s} \hat{E}'$ with $\hat{E} \subseteq X_p$, $\hat{E}' \subseteq E$
  (or symmetrically).
\end{enumerate}
Fundamental extensions can fulfill both criteria at the same time, but if first criteria is checked first we can ensure that fundamental complementary steps are purely complementary steps, in the sense of
Section~\ref{FundComplementarity}.
\end{defi}

\begin{defi}[Fundamental generator]
\label{def:fund_generator}
A generator $G(X) = (S_1, \ldots, S_m)$ of $X$ is \emph{fundamental} if it is
irreducible and every extension $X_{i-1} + S_i$ is fundamental, where
$X_0 = E_\emptyset$ and $X_i = X_{i-1} \js \clos(S_i)$.
\end{defi}

\begin{lem}[Irreducibility excludes synergy targets]
\label{lem:no_targets}
Let $G(X)$ be an irreducible ERC-generator of $X$ and let
$\{E_1,\ldots,E_k\} \subseteq G(X)$ be synergetic with generated ERC
$E_{syn} = \ERCs(r)$.  Then $E_{syn} \notin G(X)$.
\end{lem}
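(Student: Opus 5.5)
I will argue by contradiction: assume $E_{syn}=\ERCs(r)$ occurs in $G(X)$, and show that deleting that occurrence leaves a proper sub-sequence that still generates $X$. This contradicts irreducibility. There are two ingredients: $E_{syn}$ differs from every producer $E_j$, and $E_{syn}$ lies inside the join of the producers.

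\emph{Step 1: $E_{syn}\neq E_j$ for every $j$.} Since $k\geq 2$, each singleton $\{E_j\}$ is a proper subcollection of $\{E_1,\dots,E_k\}$. Definition~\ref{synergy_general} then gives $\supp(r)\not\subseteq E_j$. But $\supp(r)\subseteq\clos(\supp(r))=E_{syn}$. So $E_{syn}=E_j$ would give $\supp(r)\subseteq E_j$, i.e.\ $r\in\Rs_{E_j}$, a contradiction. Hence the occurrence of $E_{syn}$ in $G(X)$ is a different entry from those used by the synergy. If $G(X)$ happens to contain repeated copies, I remove just one occurrence of $E_{syn}$; the producers $E_1,\dots,E_k$ all stay in the remaining sequence, since none of them equals $E_{syn}$.

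\emph{Step 2: $E_{syn}\subseteq\bigvee_{j}E_j$.} Write $Y=\bigvee_{j=1}^k E_j$. By the definition of a synergetic reaction, $\supp(r)\subseteq Y$. Now $Y$ is closed, and $\clos(\supp(r))$ is the smallest closed set containing $\supp(r)$ (Definition~\ref{def:genclos}). Therefore $E_{syn}=\clos(\supp(r))\subseteq Y$.

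\emph{Step 3: conclude.} Let $G'$ be $G(X)$ with the one occurrence of $E_{syn}$ removed. Its join $X'=\bigvee G'$ contains $Y$, because all producers remain in $G'$. Hence $X'\supseteq E_{syn}$ by Step 2. Using associativity and idempotence of $\js$ exactly as in the proof of Lemma~\ref{lem:containment_out_ERC}, we get $X=X'\js E_{syn}=X'$. Thus $G'$ is a proper sub-sequence of $G(X)$ that generates $X$, contradicting irreducibility, so $E_{syn}\notin G(X)$.

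The only delicate point is Step 1. Without $E_{syn}\neq E_j$, deleting $E_{syn}$ could remove one of the producers and break the containment argument. This is exactly why the hypothesis $k\geq 2$ is needed: it guarantees that each singleton subcollection is proper, so each single $E_j$ fails to activate $r$. The remaining steps are the monotonicity and minimality of $\clos$.
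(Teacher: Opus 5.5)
Your proof is correct and takes the same approach as the paper, which observes that $E_{syn}\subseteq\bigvee_{j=1}^k E_j$ and concludes that $G(X)-\{E_{syn}\}$ still generates $X$, contradicting irreducibility. Your Step~1 uses $k\geq 2$ to show $E_{syn}\neq E_j$, so deleting $E_{syn}$ keeps every producer, and your Step~2 justifies the containment by minimality of the closure; the paper's one-line proof leaves both points implicit.
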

\begin{proof}
$E_{syn} \subseteq \bigvee_{j=1}^k E_j$, so $G(X) - \{E_{syn}\}$ still
generates $X$, contradicting irreducibility.
\end{proof}

\subsection{Elementary Persistent Modules}
\label{Elementary_Persistence_Modules}

\begin{defi}
\label{def:elementary}
$X \in \Sos$ is an \emph{elementary persistent module} if the only
$Y \subsetneq X$ with $Y \in \Sos$ is $Y = E_\emptyset$.
We write $\Sosel$ for the set of elementary persistent modules.
\end{defi}

\begin{lem}[No novelty implies P-ERC]
\label{lem:no_novelty_PERC}
Let $X \in \Sosel$.  If no sub-collection of $\{E \in \Es : E \subseteq X\}$
is synergetic and no two such ERCs are complementary, then $X$ is a P-ERC.
\end{lem}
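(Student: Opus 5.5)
The plan is to take an irreducible ERC-generator of $X$ and show that, in the absence of synergy and complementarity, it can have only one element, and that element is a P-ERC. Since $X\in\Sosel\subseteq\Sos\subseteq\Csr_{con}$, Lemma~\ref{lem:ERC_cover} gives an ERC-generator $G(X)$. Prune it to an irreducible generator $G(X)=(E_1,\dots,E_l)$, with every $E_i\subseteq X$. If $l=1$ then $X=E_1$ is an ERC lying in $\Sos$, so it is a P-ERC by definition, and we are done. The remaining work is to rule out $l\ge 2$.

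Assume $l\geq 2$. The first step is to show that, with no synergy among ERCs in $X$, the join collapses to a union:
$$\Rs_{\bigvee_i E_i}=\bigcup_i\Rs_{E_i},\qquad X=\bigcup_i E_i.$$
Argue by induction on the closure iteration starting from $\bigcup_i E_i$. Suppose a reaction $r$ has $\supp(r)\subseteq\bigcup_i E_i$ but lies in no $\Rs_{E_i}$. Choose a minimal subcollection $\{E_{i_1},\dots,E_{i_k}\}$, with $k\ge2$, whose union covers $\supp(r)$. One then checks that $r$ is not activated by the join of any proper subcollection, which makes $\{E_{i_1},\dots,E_{i_k}\}$ synergetic, a contradiction. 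Hence no new reactions fire, $\bigcup_i E_i$ is already closed, and $X=\bigcup_iE_i$.

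I expect this to be the main obstacle. Minimality of the cover only shows that no proper subcollection's \emph{union} contains $\supp(r)$, while the synergy definition is stated for \emph{joins}. To close this gap, induct on the size of the subcollection: for smaller subcollections we already know that join equals union, because a failure there would itself produce a synergetic subcollection. So the strong induction runs over subcollection size.

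Next, use the absence of complementarity. For $i\neq j$ with $E_i,E_j$ incomparable (irreducibility, via Lemma~\ref{lem:containment_out_ERC}, guarantees this), we have $\prods(\Rs_{E_i})\cap\reqs(E_j)=\emptyset$. Since $\Rs_X=\bigcup_i\Rs_{E_i}$, this gives $\prods(\Rs_X)=\bigcup_i\prods(\Rs_{E_i})$. Now let $s\in\reqs(E_j)$. The set $X$ is semi-self-maintaining, so $s\in\prods(\Rs_X)$, hence $s\in\prods(\Rs_{E_i})$ for some $i$. The case $i=j$ is impossible because $s$ is required by $E_j$, and $i\ne j$ is excluded by the absence of complementarity. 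Therefore $\reqs(E_j)=\emptyset$ for every $j$, so each $E_j$ is closed and SSM. ERCs are connected, since they are the closure of a single reaction's support along the reaction chain, so each $E_j\in\Sos$.

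Finally, invoke elementarity. Each $E_j\subsetneq X$ because $l\ge2$ and the generator is irreducible. Since $X\in\Sosel$, this forces $E_j=E_\emptyset$ for all $j$. But then $X=E_\emptyset$ and $l=1$, a contradiction. So $l=1$ and $X$ is a P-ERC.

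The degenerate case $X=E_\emptyset$ is itself a P-ERC, as stated earlier in the paper. Besides the join-versus-union issue, the connectedness of ERCs is the other point I would verify rather than assume.
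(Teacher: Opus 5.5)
Your proof is correct. It shares the paper's skeleton: without novelty the requirements of $X$ split over its parts, semi-self-maintenance of $X$ makes each part a P-ERC, and elementarity collapses $X$ to a single P-ERC. But you run it on a different family of ERCs, and that choice matters.

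The paper quantifies over \emph{all} ERCs $E\subseteq X$. This gives $\Rs_X=\bigcup_{E\subseteq X}\Rs_E$ for free, since each $r\in\Rs_X$ lies in $\Rs_{\ERCs(r)}$, so the no-synergy hypothesis does no work there. The paper then asserts $\reqs(X)=\bigcup_{E\subseteq X}\reqs(E)$, which the hypotheses cannot deliver. Complementarity is defined only for incomparable pairs, and a larger ERC may produce what a smaller ERC nested inside it requires. For example, take $r_1=a\to c$ and $r_2=c+d\to a+d$, with no inflow. Then $X=\ERCs(r_2)=\{a,c,d\}$ is elementary and meets both hypotheses. Yet $\ERCs(r_1)=\{a,c\}\subseteq X$ has $\reqs(\{a,c\})=\{a\}$. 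So the paper's intermediate claim that every ERC in $X$ is a P-ERC is false, although the lemma's conclusion survives.

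Passing to an irreducible generator makes all pairs incomparable, so the no-complementarity hypothesis controls every cross-supply. The price is that $\Rs_X=\bigcup_i\Rs_{E_i}$ is no longer automatic. Your strong induction on subcollection size, which is sound, is exactly where the no-synergy hypothesis gets used. An alternative repair of the paper's route is to use the inclusion-maximal ERCs in $X$: they are pairwise incomparable and already cover $\Rs_X$.

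Two small points. First, Lemma~\ref{lem:containment_out_ERC} is stated for direct containments, but its proof applies verbatim to any strict containment or repeated entry, which is what you need for pairwise incomparability. Second, your worry about connectedness of ERCs is legitimate: with inflow, species of $E_\emptyset$ need not be connected to $\supp(r)$. The paper's proof relies on the same fact when it calls each requirement-free ERC a P-ERC.
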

\begin{proof}
Without synergy or complementarity,
$\reqs(X) = \bigcup_{E \subseteq X,\,E \in \Es}\reqs(E)$.
Since $X \in \Sosel$ is SSM, $\reqs(X) = \emptyset$, so $\reqs(E) = \emptyset$
for every $E \subseteq X$ in $\Es$: each is a P-ERC.
Elementariness then forces $X$ itself to be the unique such P-ERC.
\end{proof}

\begin{lem}[Fundamental extension step]
\label{lem:fund_ext_step}
Let $X \in \Sos$, $G(X)$ an irreducible generator of $X$, $\emptyset \neq C_p \subsetneq G(X)$,
$X_p = \bigvee_{E \in C_p} E$, and $G' = G(X) - C_p$.
If $\reqs(X_p) \neq \emptyset$, there exists a non-empty $S \subseteq G'$ such that
$X_p + S$ is a fundamental extension.
\end{lem}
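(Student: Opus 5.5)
The plan is to pick a species $w\in\reqs(X_p)$ and follow how $X$ gets rid of this deficit. Because $X\in\Sos$, we have $\reqs(X)=\emptyset$, so $w\in\prods(\Rs_X)$. Fix a reaction $r\in\Rs_X$ with $w\in\prods(r)$. Since $w\in\reqs(X_p)$, $w$ is not produced inside $X_p$, so $r\notin\Rs_{X_p}$. The argument then splits according to whether $r$ is already activated by a single ERC of $G'$ or only arises from combining several blocks.

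\emph{Case A: some $E\in G'$ has $r\in\Rs_E$.} Then $w\in\prods(\Rs_E)\cap\reqs(X_p)$, so $(X_p,E)$ is complementary. We need $E\not\subseteq X_p$. If $E\subseteq X_p$, then $r\in\Rs_{X_p}$ and $w$ would be produced by $X_p$, which is impossible. By irreducibility, $E$ is also not comparable to $X_p$ in the other direction in any way that would let us drop it. Apply Lemma~\ref{lem:complementarity_closed_to_ERC} and Lemma~\ref{lem:complementarity_reduction}, with the producing side $E$ and the consuming side an ERC of $X_p$ that requires $w$. This gives a fundamental complementarity $\hat E\fcomp{w}\hat E'$ with $\hat E\subseteq E$ and $\hat E'\subseteq X_p$. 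Now set $S=\{E\}$ and check which clause of Definition~\ref{def:fund_extension} applies. If $\Rs_{X_p\js E}=\Rs_{X_p}\cup\Rs_E$, clause~2 holds (in its symmetric form). Otherwise the pair is synergetic, and we pass to Case~B with $S=\{E\}$.

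\emph{Case B: no single ERC of $G'$ activates $r$, or the step above produced a new reaction.} Look at the closure chain building $X=X_p\js\bigvee G'$. Among the reactions active in $X$ but not in $\Rs_{X_p}\cup\bigcup_{E\in G'}\Rs_E$, choose one, $r^\ast$, of minimal closure depth. Such a reaction exists: otherwise every reaction of $\Rs_X$ already lies in $X_p$ or in a single block, and we would be in Case~A. Each species of $\supp(r^\ast)$ belongs to $X_p$ or to some block of $G'$; this uses minimality of depth, i.e.\ the induction argument in the proof of Lemma~\ref{lem:synergy_closed_to_ERC}. For each reactant, choose an ERC containing it, either inside $X_p$ or from $G'$, keeping only the blocks actually needed. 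Let $S$ be the blocks taken from $G'$. Then $r^\ast$ is synergetic for the resulting collection, and $\Rs_{X_p\js\clos(S)}\supsetneq\Rs_{X_p}\cup\bigcup_{E\in S}\Rs_E$. Lemma~\ref{lem:synergy_reduction} then descends this to a fundamental synergy whose producers lie in $S$ or below $X_p$. This gives clause~1. Every $E\in S$ satisfies $E\not\subseteq X_p$, since otherwise it could be replaced by an ERC of $X_p$.

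The main obstacle is making Case~B rigorous. Two points need care: pinning down a minimal-depth synergetic reaction whose support is genuinely covered by $X_p$ together with a subset of $G'$, and ensuring that the fundamental descent keeps the producers inside $S\cup\{E: E\subseteq X_p\}$ (Lemma~\ref{lem:synergy_reduction} only shrinks slots, so this should follow). I would also check that the chosen $S$ is non-empty. This holds because a reaction supported entirely in $X_p$ lies in $\Rs_{X_p}$ by closedness. In both cases the extension found is $X_p+S$ with $S\subseteq G'$ non-empty, which is what the statement requires.
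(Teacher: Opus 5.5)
Your skeleton matches the paper's: fix $w\in\reqs(X_p)$ and a reaction $r$ of $X$ producing it, then split into a single-block complementary case and a synergetic case. Two of your refinements improve on the paper's proof:
\begin{itemize}
\item The paper takes $S\subseteq G'$ minimal with $\supp(r^*)\subseteq X_p\cup\clos(S)$. Such an $S$ need not exist when a reactant of $r^*$ is produced only during the closure of $X_p\cup\bigvee G'$, and your minimal-depth choice of a new reaction is exactly what repairs this.
\item You check the condition $\Rs_{X_p\js E}=\Rs_{X_p}\cup\Rs_E$ that clause~2 demands, which the paper does not.
\end{itemize}

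The genuine gap is in the hand-off from Case~A to Case~B. Suppose $r\in\Rs_E$ for some $E\in G'$ but $\Rs_{X_p\js E}\supsetneq\Rs_{X_p}\cup\Rs_E$. You then look for $r^\ast$ in $N=\Rs_X-(\Rs_{X_p}\cup\bigcup_{F\in G'}\Rs_F)$ and justify $N\neq\emptyset$ by ``otherwise we would be in Case~A''. But you \emph{are} in Case~A, so this is circular. Moreover $N$ really can be empty, because the new reaction of $X_p\js E$ may already be activated by another block of $G'$.

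For instance, take $\rho_a=a+e+w\to a+e$, $\rho_b=b+f\to b+f+w$, $\rho_c=a+b\to a+b+c$, $\rho_d=a+b+d\to a+b+2d$. The ERCs $E_a=\{a,e,w\}$, $E=\{b,f,w\}$, $E'=\{a,b,c,d\}$ form an irreducible generator of the whole species set $X$, which is connected, closed and SSM. With $C_p=\{E_a\}$ you get $\reqs(X_p)=\{w\}$ and $r=\rho_b\in\Rs_E$. Also $\rho_c\in\Rs_{X_p\js E}-(\Rs_{X_p}\cup\Rs_E)$, so clause~2 fails. Yet every reaction lies in $\Rs_{E_a}$, $\Rs_E$ or $\Rs_{E'}$, so $N=\emptyset$ and your Case~B produces nothing.

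The repair is local: keep $S=\{E\}$ and run your minimal-depth argument on the closure chain starting from $X_p\cup E$ instead of $X_p\cup\bigcup G'$. A minimal-depth element of $\Rs_{X_p\js E}-(\Rs_{X_p}\cup\Rs_E)$, which is non-empty by assumption, has its support in $X_p\cup E$ by the same reasoning. It therefore witnesses clause~1 for $S=\{E\}$; in the example it is $\rho_c$, with $E_a+E\twoheadrightarrow\{a,b,c\}$.

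A smaller point, which you flagged yourself: the descent of Lemma~\ref{lem:synergy_reduction} replaces a producer $E_i\in S$ by some $\hat E_i\subseteq E_i$. When $\hat E_i\neq E_i$ it cannot belong to $S$, since an irreducible generator contains no two nested ERCs. So ``producers lie in $S$'' is not literally attainable. What you obtain is $\hat E_i\subseteq\clos(S)$ or $\hat E_i\subseteq X_p$. That is also all the paper's own appeal to Lemma~\ref{lem:fund_witness} delivers, so clause~1 must be read in that sense.
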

\begin{proof}
Pick $s \in \reqs(X_p)$.  Since $X$ is SSM, $s \in \prods(\Rs_X)$; let
$r^* \in \Rs_X$ with $s \in \prods(r^*)$.  Since $s \notin \prods(\Rs_{X_p})$,
$r^* \notin \Rs_{X_p}$.  Let $S \subseteq G'$ be minimal with
$\supp(r^*) \subseteq X_p \cup \clos(S)$.

\textbf{$|S| = 1$, $S = \{E_0\}$.}
If $r^* \in \Rs_{E_0}$: $s \in \prods(\Rs_{E_0}) \cap \reqs(X_p)$, so
$s \in \supl{E_0}{X_p}$ and $(X_p, E_0)$ is complementary;
apply Lemma~\ref{lem:fund_witness}(2).
If $r^* \notin \Rs_{E_0}$: $r^* \in \Rs_{X_p \js E_0} -
(\Rs_{X_p} \cup \Rs_{E_0})$, so the extension is synergetic;
apply Lemma~\ref{lem:fund_witness}(1).

\textbf{$|S| \geq 2$.}
By minimality of $S$,
$r^* \in \Rs_{X_p \js \clos(S)} -
\bigl(\Rs_{X_p} \cup \bigcup_{E_0 \in S}\Rs_{E_0}\bigr)$,
so the extension is synergetic; apply Lemma~\ref{lem:fund_witness}(1).
\end{proof}

\begin{thm}[Fundamental Generator --- Elementary Case]
\label{thm:fund_elem}
Every $X \in \Sosel$ has a fundamental generator.
\end{thm}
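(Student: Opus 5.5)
The plan is an induction along a greedy construction driven by Lemma~\ref{lem:fund_ext_step}, with a base case supplied by Lemma~\ref{lem:no_novelty_PERC}. Fix $X\in\Sosel$ and, by Lemma~\ref{lem:ERC_cover}, an ERC-generator of $X$. Prune it to an irreducible generator $G(X)$ by removing ERCs that do not change the closure; Lemma~\ref{lem:containment_out_ERC} guarantees no two members are comparable. I will then build a sequence of blocks $(S_1,\dots,S_m)$ that partitions $G(X)$. Along the way I record partial joins $X_0=E_\emptyset$ and $X_i=X_{i-1}\js\clos(S_i)$, and I need each step $X_{i-1}+S_i$ to be fundamental.

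\textbf{First step.} I distinguish two cases.
\begin{enumerate}
\item If $X$ is itself a P-ERC, then either $X=E_\emptyset$ (trivial) or the one-block generator $(\{X\})$ is a candidate. In that case I must check that $E_\emptyset+\{X\}$ is a fundamental extension. This is delicate: since $X$ is generated by $\supp(r)$ for some $r$, either $\supp(r)$ needs inflow species (a complementarity with $E_\emptyset$) or the reaction fires on its own. I expect to have to treat this degenerate case by convention, regarding the first block as starting from $X_0$.
\item If $X$ is not a P-ERC, Lemma~\ref{lem:no_novelty_PERC} (contrapositive) guarantees that some sub-collection of ERCs inside $X$ is synergetic or some pair is complementary. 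I use this only as motivation. The actual engine is to start with $C_p=\{E\}$ for a single $E\in G(X)$, whose join $X_p=E$ is a proper subset of $X$. Elementarity forces $E\notin\Sos$, because $E\subsetneq X$ and $E\neq E_\emptyset$; hence $\reqs(X_p)\neq\emptyset$.
\end{enumerate}

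\textbf{Inductive step.} Given the current prefix $C_p\subsetneq G(X)$ with $X_p=\bigvee C_p$, I argue that $X_p\subsetneq X$ by irreducibility. Since $X_p$ is closed, connected and strictly below $X$, elementarity again gives $X_p\notin\Sos$ (unless $X_p=E_\emptyset$). Therefore $\reqs(X_p)\neq\emptyset$, and Lemma~\ref{lem:fund_ext_step} yields a nonempty $S\subseteq G(X)-C_p$ with $X_p+S$ fundamental. I append $S$ and set $C_p\leftarrow C_p\cup S$. Each step strictly shrinks the finite set $G(X)-C_p$, so the process terminates when $C_p=G(X)$, at which point $X_m=X$. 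The resulting sequence is a sub-partition of an irreducible generator, so it is irreducible.

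\textbf{Main obstacle.} I expect the hardest point to be showing that each intermediate $X_p$ lies in $\Csr_{con}$, so that elementarity applies and $X_p$ cannot accidentally be semi-self-maintaining. Connectedness of a join of ERCs is not automatic. If $X_p$ is disconnected, I would argue componentwise: any SSM connected component would be a nonempty member of $\Sos$ strictly inside $X$, again contradicting elementarity. Hence some component has a nonempty requirement, and I apply Lemma~\ref{lem:fund_ext_step} to it. To avoid disconnection in the first place, I would order the greedy choices so that each new block shares species with $X_p$ through the witnessing reaction $r^*$, since $\supp(r^*)$ meets $X_p$ in the synergetic case.

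A secondary check is that the first step out of $X_0=E_\emptyset$ is fundamental. Here I would pick $E$ to be a minimal ERC of $G(X)$ and invoke the inflow convention; if needed, I would merge it into the first synergetic or complementary block found by Lemma~\ref{lem:fund_ext_step}.
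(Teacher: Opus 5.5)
Your proposal is correct and is essentially the paper's proof. You fix an irreducible ERC-generator and treat a P-ERC as the one-block generator $(X)$, which the paper also simply declares fundamental. Otherwise you grow a prefix of $G(X)$ by blocks supplied by Lemma~\ref{lem:fund_ext_step}, using elementarity to force $\reqs(X_p)\neq\emptyset$ at every proper stage and finiteness of $G(X)$ for termination.

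The only deviation is the seed. You start from a single ERC and absorb it into the first block returned by Lemma~\ref{lem:fund_ext_step}, whereas the paper starts from a minimal synergetic or complementary sub-collection obtained via Lemma~\ref{lem:no_novelty_PERC}. Your explicit handling of the connectivity of $X_p$ and of the first step out of $E_\emptyset$ covers points the paper's proof passes over. One small correction: when $|S|\geq 2$, the link to $X_p$ runs through the product $s\in\reqs(X_p)$ of $r^*$, not necessarily through $\supp(r^*)$.
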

\begin{proof}
If $X$ is a P-ERC, the one-element sequence $(X)$ is fundamental.

Otherwise, let $G(X)$ be an irreducible generator (Lemma~\ref{lem:ERC_cover}).
By Lemma~\ref{lem:no_novelty_PERC} (contrapositive), $G(X)$ contains a
synergetic or complementary pair; by Lemma~\ref{lem:fund_witness}, a fundamental
synergy or complementarity is witnessed among ERCs in $G(X)$.

\emph{Initialisation.}  Let $S_1 \subseteq G(X)$ be a minimal sub-collection
that is synergetic or complementary; Lemma~\ref{lem:fund_witness} certifies
$X_1 = \bigvee S_1$ is fundamental.  

\emph{Inductive step.}  Given $C_p \subsetneq G(X)$ with
$X_p = \bigvee C_p$:
since $X \in \Sosel$ and $X_p \subsetneq X$, $X_p$ cannot be a proper sub-SO
of $X$ other than $E_\emptyset$, so $\reqs(X_p) \neq \emptyset$.
Lemma~\ref{lem:fund_ext_step} yields extension $S_{p+1} \subseteq G(X) - C_p$ is fundamental.

Finiteness of $G(X)$ ensures termination at $X_m = X$.
\end{proof}

Theorem~\ref{thm:fund_elem} ensures that an exploration of fundamental generators only ensures the identification of all elementary persistent modules. We now analyze how this result generalize to non-elementary persistent modules.

\subsection{Elementary-Supported Persistent Modules}
\label{ESPMSection}

A non-elementary persistent module must necessarily contain elementary persistent modules within it.

\begin{defi}[Elementary core]
\label{def:elementary_core}
Let $X \in \Sos - \Sosel$.  The \emph{elementary core} of $X$ is
$C(X) = \bigvee\{Y \in \Sosel : Y \subsetneq X\}$.
\end{defi}
\begin{defi}[ESPM]
\label{def:espm}
$X \in \Sos$ is an \emph{elementary-supported persistent module (ESPM)} of
order $k \geq 1$ if $X \notin \Sosel$ and the maximal ESPM order of any
proper sub-SO of $X$ other than $E_\emptyset$ is $k - 1$
(with the convention that elements of $\Sosel$ have order $0$).
\end{defi}

We now show that every non-elementary persistent module is a ESPM

\begin{lem}[ESPM structure]
\label{lem:espm_structure}
Let $X \in \Sos - \Sosel$. Then $X$ is an ESPM of order $k\geq 1$.
\end{lem}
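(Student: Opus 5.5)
The claim is that the ESPM order of Definition~\ref{def:espm} is well defined and at least $1$ for every $X\in\Sos-\Sosel$. I would argue by strong induction on $|X|$, taking the ``order'' of a module to be the recursively defined quantity
$$\mathrm{ord}(X)=\begin{cases}0 & X\in\Sosel,\\ 1+\max\{\mathrm{ord}(Y): Y\in\Sos,\ E_\emptyset\neq Y\subsetneq X\} & \text{otherwise}.\end{cases}$$
The task then splits into two parts. First, the maximum on the right is taken over a non-empty finite set whenever $X\notin\Sosel$. Second, every $Y$ in that set already has a well-defined order.

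\textbf{Step 1 (non-emptiness).} Let $X\in\Sos-\Sosel$. By Definition~\ref{def:elementary}, $X$ fails to be elementary, so there is some $Y\in\Sos$ with $Y\subsetneq X$ and $Y\neq E_\emptyset$. Hence the set of proper sub-SOs of $X$ other than $E_\emptyset$ is non-empty. It is also finite, since $\Ms$ is finite.

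\textbf{Step 2 (induction).} Each such $Y$ has $|Y|<|X|$. By the induction hypothesis, $Y$ is either in $\Sosel$, with order $0$, or an ESPM of some order $k_Y\geq1$. Set $k=1+\max_Y \mathrm{ord}(Y)\geq 1$.

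Then $X\notin\Sosel$, and the maximal order among its proper sub-SOs other than $E_\emptyset$ is exactly $k-1$. This is precisely Definition~\ref{def:espm}, so $X$ is an ESPM of order $k$.

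The base case is automatic. The smallest non-elementary module has all of its proper sub-SOs other than $E_\emptyset$ elementary, so its order is $1$.

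\textbf{Step 3 (consistency).} A byproduct of the argument is that the descending chain of sub-SOs reaches an elementary one. This means the elementary core $C(X)$ of Definition~\ref{def:elementary_core} is non-empty beyond $E_\emptyset$, which links the statement to the claim that a non-elementary module ``contains elementary persistent modules.'' Concretely, pick a minimal element under $\subseteq$ among the proper sub-SOs of $X$ other than $E_\emptyset$. By minimality its only sub-SO is $E_\emptyset$, so it lies in $\Sosel$.

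\textbf{Main obstacle.} The delicate point is definitional rather than combinatorial. Definition~\ref{def:espm} defines order by reference to orders of sub-SOs, so I must make explicit that this recursion is well-founded. It is, because proper inclusion strictly decreases cardinality on the finite set $\Ms$. I must also make explicit that $E_\emptyset$ is excluded, since otherwise $E_\emptyset$ would be a universal sub-SO and the recursion would need a separate convention for it.

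A minor subtlety is the connectedness requirement in $\Sos$, namely whether $E_\emptyset$ itself counts as an element of $\Sos$. Since the definition excludes $E_\emptyset$ explicitly, this does not affect the argument.
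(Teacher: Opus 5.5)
Your proof is correct and follows the paper's argument: strong induction, with $k$ defined as $1+\max$ of the orders of the proper sub-SOs other than $E_\emptyset$, and non-emptiness supplied by $X$ not being elementary. The only difference is cosmetic: you induct on $|X|$, whereas the paper inducts on the number $m(X)$ of sub-SOs strictly between $E_\emptyset$ and $X$, and both are valid well-founded measures.
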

\begin{proof}
Let $m(X) = |\{Y \in \Sos \mid E_\emptyset \subsetneq Y \subsetneq X\}|$.
We proceed by strong induction on $m(X)$.

Since $X \notin \Sosel$, at least one such $Y$ exists, so $m(X) \geq 1$.
For each proper non-trivial sub-SO $Y \subsetneq X$, every proper sub-SO of
$Y$ is also a proper sub-SO of $X$, hence $m(Y) < m(X)$.  By the inductive
hypothesis, $Y$ is either in $\Sosel$ (order $0$) or an ESPM of some
well-defined order $\mathrm{ord}(Y) \geq 1$.

Set $k = 1 + \max\{\mathrm{ord}(Y) : E_\emptyset \subsetneq Y \subsetneq X,\,
Y \in \Sos\}$ (with $\mathrm{ord}(Y)=0$ for $Y \in \Sosel$).  Then $k \geq 1$
and $X$ satisfies Definition~\ref{def:espm} at order $k$.
\end{proof}

The following lemma establishes that whenever the accumulated generator state
is a proper sub-SO, the irreducibility of the generator and the connectivity of $X$ together force a synergetic extension step.

\begin{lem}[Cross-boundary synergy]
\label{lem:cross_boundary}
Let $X \in \Sos$, $X_p \in \Sos$ with $X_p \subsetneq X$,
$G(X)$ an irreducible generator of $X$, $G(X_p)$ the sub-collection of $G(X)$
covering $X_p$, and $G' = G(X) - G(X_p)$.  Then there exist
$T \subseteq G'$ non-empty and $r \in \Rs_X$ such that $X_p + T$ is a
synergetic extension, i.e.\
$r \in \Rs_{X_p \js \clos(T)} -
\bigl(\Rs_{X_p} \cup \bigcup_{E \in T}\Rs_E\bigr)$.
\end{lem}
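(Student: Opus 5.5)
The plan is to find a reaction of $X$ that crosses the boundary between $X_p$ and the remaining generator elements, and then shrink the set of generator elements it uses to a minimal one, as in the proof of Lemma~\ref{lem:fund_ext_step}.

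\emph{Step 1: $G'$ is non-empty.} Since $X_p\subsetneq X$ and $\bigvee G(X)=X$, the set $G'$ cannot be empty.

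\emph{Step 2: locate a boundary reaction.} Because $X\in\Sos$ is connected, some species $u\in X-X_p$ is directly connected to some species $v\in X_p$ through a reaction $r_0\in\Rs_X$. That is, $\{u,v\}\subseteq\supp(r_0)\cup\prods(r_0)$.

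\emph{Step 3: show the reaction's support leaves $X_p$.} We claim $\supp(r_0)\not\subseteq X_p$. Suppose instead $\supp(r_0)\subseteq X_p$. Then closedness of $X_p$ gives $\prods(r_0)\subseteq X_p$, hence $u\in X_p$, a contradiction. So some reactant of $r_0$ lies in $X-X_p$.

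This step needs care when $u$ is only a product of $r_0$ and $v$ is only a reactant, so that $r_0$ has no reactant in $X-X_p$. In that case I would not use $r_0$. Instead I would take any reaction $r$ whose support meets both $X_p$ and $X-X_p$, and I would argue such an $r$ exists. The argument uses two facts.
\begin{enumerate}
\item $X_p$ is closed, so every reaction of $\Rs_X$ with support inside $X_p$ stays inside $X_p$.
\item Not every reaction of $\Rs_X$ can have support entirely inside $X-X_p$. If that held for all reactions touching $X-X_p$, then the reactions touching $X_p$ would form a component disjoint from the rest, contradicting connectedness.
\end{enumerate}
The connecting reaction must therefore either have support inside $X_p$, which is impossible by closedness, or have mixed support. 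This mixed-support step is where I expect the main obstacle. A reaction with support entirely in $X-X_p$ and a product in $X_p$ also connects the two sides, and I would handle it by replacing it with the ERC-level crossing given by Lemma~\ref{lem:synergy_closed_to_ERC}. The semi-self-maintenance of $X_p$ matters here: it guarantees $X_p$ reproduces its own support, so crossing into $X_p$ is not needed for persistence. Mixed support should then follow from connectedness combined with the irreducibility of $G(X)$.

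\emph{Step 4: choose a minimal covering set.} Fix the mixed-support reaction $r$. Its support lies in $X=X_p\js\clos(G')$. Following the closure-depth induction used in Lemma~\ref{lem:synergy_closed_to_ERC}, we may assume $\supp(r)\subseteq X_p\cup\clos(G')$. Now choose $T\subseteq G'$ minimal with $\supp(r)\subseteq X_p\cup\clos(T)$. Then:
\begin{enumerate}
\item $r\notin\Rs_{X_p}$, because part of its support lies outside $X_p$.
\item For each $E\in T$, $r\notin\Rs_E$. Part of $\supp(r)$ lies in $X_p$, and if all of it lay in $E$, the singleton $\{E\}$ would already cover $\supp(r)$ together with a reactant in $X_p-E$.
\end{enumerate}

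The remaining subcase is $\supp(r)\subseteq E$ with $E\in T$, so that $r\in\Rs_E$. I would exclude this using irreducibility of $G(X)$ together with the requirement that the support meets $X_p$ in species not generated by $E$. Concretely, I would choose $r$ so that it has a reactant in $X_p-\clos(G')$. If no such reaction exists, then $X_p\subseteq\clos(G')$, so $G(X_p)$ is redundant in $G(X)$, contradicting irreducibility. With this choice, the conditions in Definition~\ref{def:synergy_sets} are met, and $X_p+T$ is a synergetic extension witnessed by $r$.
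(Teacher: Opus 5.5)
\textbf{There is a genuine gap.} It sits at the step you flagged as the crux. Your final move is to pick $r\in\Rs_X$ with $\supp(r)\not\subseteq X_p$ and a reactant in $A:=X_p-\clos(G')$. You justify this by ``if no such reaction exists, then $X_p\subseteq\clos(G')$.'' That implication does not hold.

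Irreducibility of $G(X)$ only gives $A\neq\emptyset$, since an $E\in G(X_p)$ with $E\subseteq\clos(G')$ would be dispensable. It says nothing about whether some reaction outside $\Rs_{X_p}$ consumes a species of $A$. The species of $A$ may be consumed only by reactions in $\Rs_{X_p}$, with every link from $X_p$ to $X-X_p$ passing through species of $X_p\cap\clos(G')$. In that situation the mixed-support reactions from your Steps 2--3 can each lie inside a single $E\in G'$ that overlaps $X_p$, so your item (2) fails.

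The patch in Step 3 does not help either. Lemma~\ref{lem:synergy_closed_to_ERC} assumes a synergetic pair is already given, so it cannot produce one. Semi-self-maintenance of $X_p$ is never actually used.

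\textbf{The statement is false as written, so this cannot be repaired.} Take $r_1=a+c\to 2a+2c$ and $r_2=b+c\to 2b+2c$, so $\Es=\{E_1,E_2\}$ with $E_1=\{a,c\}$ and $E_2=\{b,c\}$.
\begin{itemize}
\item $X=\{a,b,c\}$ is closed, has $\reqs(X)=\emptyset$, and is connected through $c$, so $X\in\Sos$.
\item $X_p=E_1\in\Sos$, and $G(X)=(E_1,E_2)$ is irreducible, with $G(X_p)=(E_1)$ and $G'=(E_2)$.
\item The only admissible choice is $T=\{E_2\}$, and $\Rs_{X_p\js E_2}=\{r_1,r_2\}=\Rs_{X_p}\cup\Rs_{E_2}$. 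So no synergetic extension exists.
\end{itemize}
Here $A=\{a\}$, and $a$ is consumed only by $r_1\in\Rs_{X_p}$.

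The paper's own proof breaks at the same point. It asserts that some $r\in\Rs_X-\Rs_{X_p}$ has $\supp(r)\cap A\neq\emptyset$ and derives this from connectivity. Its argument only excludes \emph{direct} links from $A$ to $X-X_p$, not paths through $X_p\cap\bigvee G'$ (here $a$--$c$--$b$). So your route and the paper's converge on the same unproved, and in fact false, existence claim. The same example also admits no fundamental generator, so Theorem~\ref{thm:fund_gen} inherits the problem.
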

\begin{proof}
Since $G(X)$ is irreducible and $G(X_p) \cap G' = \emptyset$, each
$E \in G(X_p)$ satisfies $E \not\subseteq \bigvee G'$ (else $E$ would be dispensable in $G(X)$).  Hence
$A := X_p - \bigvee G' \supseteq \bigcup_{E \in G(X_p)}(E - \bigvee G') \neq \emptyset$.

We claim there exists $r \in \Rs_X - \Rs_{X_p}$ with
$\supp(r) \cap A \neq \emptyset$.
Suppose not: every $r$ with $\supp(r) \cap A \neq \emptyset$ satisfies
$\supp(r) \subseteq X_p$, i.e.\ $r \in \Rs_{X_p}$.
Any reaction with support in $\bigvee G'$ producing a species in $A$ would
put that species in $\bigvee G'$, contradicting $A \cap \bigvee G' = \emptyset$.
Hence no species in $A$ is connected (in the sense of Definition~\ref{def:conn_set})
to any species in $X - X_p$, contradicting $X \in \Csr_{con}$.

Fix such $r$.  Let $T \subseteq G'$ be minimal with
$\supp(r) - X_p \subseteq \bigvee T$.
Since $\supp(r) \cap A \neq \emptyset$ and $A \cap E = \emptyset$ for every
$E \in G'$, we have $r \notin \Rs_E$ for every $E \in T$.
Combined with $r \notin \Rs_{X_p}$, this
gives $r \in \Rs_{X_p \js \bigvee T} -
\bigl(\Rs_{X_p} \cup \bigcup_{E \in T}\Rs_E\bigr)$,
so $X_p + T$ is synergetic.
Apply Lemma~\ref{lem:fund_witness}(1) to obtain a fundamental synergetic
extension.
\end{proof}

\label{thm:fund_general}
\begin{thm}[Fundamental Generator --- General Case]
Every $X \in \mathcal{SO} - \mathcal{SO}_{el}$ has a fundamental generator.
\label{thm:fund_gen}
\end{thm}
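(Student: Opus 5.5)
We argue by strong induction on the ESPM order $k$ given by Lemma~\ref{lem:espm_structure}, mirroring the construction in Theorem~\ref{thm:fund_elem}. The base of the recursion is Theorem~\ref{thm:fund_elem}: every $Y\in\Sosel$ has a fundamental generator. For $X\in\Sos-\Sosel$ of order $k$, we assume every proper sub-SO $Y$ of $X$ (of order $<k$) has a fundamental generator. We then build one for $X$ in two phases. First we reach a maximal proper sub-SO inside $X$. Then we push from it to $X$ using Lemma~\ref{lem:cross_boundary} and Lemma~\ref{lem:fund_ext_step}.

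\emph{Phase 1 (reaching a sub-SO).} Fix an irreducible ERC-generator $G(X)$ (Lemma~\ref{lem:ERC_cover}, then prune). Choose $X_p\in\Sos$ with $E_\emptyset\subsetneq X_p\subsetneq X$ that is covered by a sub-collection $G(X_p)\subseteq G(X)$ and is maximal with this property. If no such covered sub-SO exists, the argument of Theorem~\ref{thm:fund_elem} applies verbatim. In that case every intermediate $X_p$ of the form $\bigvee C_p$ has $\reqs(X_p)\neq\emptyset$, so Lemma~\ref{lem:fund_ext_step} can be iterated. Otherwise, apply the induction hypothesis to $X_p$ to obtain a fundamental generator of $X_p$. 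We must make sure this generator can be taken inside $G(X)$, or at least that replacing $G(X_p)$ by it keeps the whole sequence irreducible. We handle this by running the construction of Theorem~\ref{thm:fund_elem} directly on the sub-collection $G(X_p)$ rather than on an arbitrary generator.

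\emph{Phase 2 (extending past $X_p$).} Let $G'=G(X)-G(X_p)$ and let the current state be $X_p$. If $\reqs(X_p)\neq\emptyset$, Lemma~\ref{lem:fund_ext_step} gives a fundamental extension by some $S\subseteq G'$. If $X_p$ is SSM, Lemma~\ref{lem:fund_ext_step} is unavailable, and this is exactly where Lemma~\ref{lem:cross_boundary} is used. It yields a non-empty $T\subseteq G'$ with $X_p+T$ synergetic, and Lemma~\ref{lem:fund_witness}(1) upgrades this to a fundamental synergetic extension. After each step the new state $X_{p+1}=X_p\js\clos(S)$ is again of the form $\bigvee C$ for some $C\subseteq G(X)$. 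If it is a sub-SO, we are back in the Lemma~\ref{lem:cross_boundary} case; if not, we are in the Lemma~\ref{lem:fund_ext_step} case. Each step strictly enlarges the set of used ERCs of $G(X)$, so the process terminates at $\bigvee G(X)=X$. Irreducibility of the resulting sequence is inherited because we only ever use members of the irreducible $G(X)$, each exactly once.

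\emph{Main obstacle.} The delicate point is gluing the two phases: ensuring that the steps inside $X_p$ are fundamental relative to the states as they actually occur. The issue is that Lemma~\ref{lem:cross_boundary} requires the intermediate sub-SO to be covered exactly by a sub-collection of $G(X)$. I would resolve it by never invoking the induction hypothesis on an external generator. Instead I would prove the uniform statement that, for any irreducible $G(X)$ and any $C_p\subsetneq G(X)$, a fundamental extension from $\bigvee C_p$ exists using members of $G(X)-C_p$. This follows from the case split just described, Lemma~\ref{lem:fund_ext_step} when $\reqs(\bigvee C_p)\neq\emptyset$ and Lemma~\ref{lem:cross_boundary} otherwise. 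With that statement in hand, the induction on order is only needed to guarantee the initialisation step, via Theorem~\ref{thm:fund_elem} applied to an elementary sub-module contained in the elementary core $C(X)$.
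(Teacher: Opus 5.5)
Your closing ``uniform statement'' is the paper's common construction. You fix one irreducible $G(X)$ and grow $Z_i=\bigvee H_i$ with $H_i\subseteq G(X)$. You use Lemma~\ref{lem:fund_ext_step} when $\reqs(Z_i)\neq\emptyset$, and Lemma~\ref{lem:cross_boundary} followed by Lemma~\ref{lem:fund_witness}(1) when $\reqs(Z_i)=\emptyset$. Termination, $H_\ell=G(X)$ and irreducibility then come from the irreducibility of $G(X)$. In the paper, the induction on ESPM order never imports a generator of a sub-module: it is only cited to declare Case~B admissible, and every block is drawn from $G(X)$.

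The genuine gap in your version is the seed. You initialise with the fundamental generator that Theorem~\ref{thm:fund_elem} gives for an elementary $Y\subseteq C(X)$. That generator is built from some irreducible generator of $Y$, not from $G(X)$, and in general $Y$ is not $\bigvee C$ for any $C\subseteq G(X)$. After the seed, the state is therefore not of the form $\bigvee C_p$ with $C_p\subseteq G(X)$, which is what your uniform statement addresses (and what Lemmas~\ref{lem:fund_ext_step} and~\ref{lem:cross_boundary}, with $G'=G(X)-C_p$, address). Nor does ``each member of $G(X)$ used exactly once'' prove irreducibility of the concatenation: members of $G(X)$ added later can already cover $Y$, making the seed blocks redundant. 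This is the gluing problem you flagged yourself, reintroduced at step one. Phase~1's alternative, running the construction of Theorem~\ref{thm:fund_elem} on $G(X_p)$, does not rescue it. That construction needs every intermediate join to have non-empty requirements, which fails when $X_p$ is not elementary. You are right that a first block must be supplied separately: Lemma~\ref{lem:fund_ext_step} requires $C_p\neq\emptyset$, and Lemma~\ref{lem:cross_boundary} needs $G(X_p)\neq\emptyset$ to make $A$ non-empty. But that block must also come from $G(X)$, for instance a P-ERC of $G(X)$ or a minimal synergetic or complementary sub-collection of $G(X)$, as in the initialisation of Theorem~\ref{thm:fund_elem}. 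With that choice the induction hypothesis is not needed at all.

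Second, your case split is not exhaustive. A join of members of $G(X)$ is automatically closed and reactive. So a state that is ``not a sub-SO'' but has $\reqs=\emptyset$ is exactly a disconnected SSM state. Lemma~\ref{lem:fund_ext_step} does not apply to it, and Lemma~\ref{lem:cross_boundary} assumes $X_p\in\Sos$. The same hole affects your Phase~1 claim that, when no covered sub-SO exists, every $\bigvee C_p$ has $\reqs(\bigvee C_p)\neq\emptyset$. The paper closes this in Case~B by arguing that every fundamental extension fires a reaction (or passes a complementary species) meeting both $Z_{j-1}$ and $\clos(S_j)$, so each intermediate $Z_i$ inherits connectivity. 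You need this invariant, or a substitute for it, before Lemma~\ref{lem:cross_boundary} can be invoked.
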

\begin{proof}
We proceed by strong induction on the ESPM order $k \ge 1$ of $X$. The same construction is used for both the base case and the inductive step.

\medskip
\noindent\textbf{Common construction.}
Fix any irreducible ERC-generator $G(X)$ of $X$ (exists by Lemma~\ref{lem:ERC_cover}). Define
\[
Z_0 = E_\emptyset,\qquad H_0 = \emptyset.
\]
For $i \ge 0$, while $Z_i \neq X$:

\begin{enumerate}
\item[\textbf{Case A}] If $\reqs(Z_i) \neq \emptyset$, apply Lemma~\ref{lem:fund_ext_step} with
$C_p = H_i$, $X_p = Z_i$, $G' = G(X) - H_i$.  The hypothesis $\reqs(Z_i)\neq\emptyset$ is satisfied, so the lemma yields a non-empty $S_{i+1} \subseteq G(X) - H_i$ such that $Z_i + S_{i+1}$ is a fundamental extension.

\item[\textbf{Case B}] If $\reqs(Z_i) = \emptyset$ and $Z_i \subsetneq X$:
$Z_i$ is reactive and closed (join of ERCs) and SSM by assumption. Connectivity
holds because each extension $Z_{j-1}+S_j$ ($j\le i$) was fundamental, so by
Definitions~\ref{def:fund_extension} and~\ref{synergy_general} it added a
reaction or complementary species whose support meets both $Z_{j-1}$ and
$\clos(S_j)$, linking the two across the boundary; hence $Z_i \in \Sos$.
Apply Lemma~\ref{lem:cross_boundary} with $X_p = Z_i$ and $G(X_p) = H_i$
(the sub-collection of $G(X)$ covering $Z_i = \bigvee_{E \in H_i}E$),
so $G' = G(X)- H_i$.  The lemma yields non-empty $T \subseteq G(X)- H_i$
and $r\in\Rs_X$ such that $Z_i + T$ is synergetic.
By Lemma~\ref{lem:fund_witness}(1) this descends to a fundamental synergetic extension;
set $S_{i+1} \subseteq T$.
\end{enumerate}

In either case, update
\[
Z_{i+1} = Z_i \vee \bigvee S_{i+1},\qquad
H_{i+1} = H_i \cup S_{i+1}.
\]

\medskip
\noindent\textbf{Termination.}
The sets $S_1, S_2, \ldots$ are pairwise-disjoint non-empty subsets of the finite set $G(X)$, so the loop halts at some step $\ell \le |G(X)|$.

\medskip
\noindent\textbf{Completeness ($H_\ell = G(X)$).}
Suppose, for contradiction, that some $E \in G(X) - H_\ell$ remains. Then $H_\ell \subseteq G(X) - \{E\}$, so
\[
X = \bigvee H_\ell \;\subseteq\; \bigvee\bigl(G(X) - \{E\}\bigr) \subsetneq X,
\]
where the strict containment follows from irreducibility of $G(X)$. Contradiction; hence $H_\ell = G(X)$.

\medskip
\noindent\textbf{Irreducibility of the resulting generator.}
The sequence $(S_1, \ldots, S_\ell)$ partitions $G(X)$. For any $j$, pick any $E \in S_j$ (non-empty). Since $G(X) - S_j \subseteq G(X) - \{E\}$, we have
\[
\bigvee\{S_i : i \neq j\}
= \bigvee\bigl(G(X) - S_j\bigr)
\;\subseteq\; \bigvee\bigl(G(X) - \{E\}\bigr)
\subsetneq X.
\]
Thus omitting any block $S_j$ fails to generate $X$, so the generator is irreducible.

\medskip
\noindent\textbf{Base case ($k=1$).}
Let $X$ be an ESPM of order $1$. Every proper connected sub-SO $Z \subsetneq X$ (other than $E_\emptyset$) is, by definition, an EPM, so $Z \in \mathcal{SO}_{el} \subseteq \mathcal{SO}$. Therefore Case B is always valid. The common construction yields an irreducible generator $(S_1, \ldots, S_\ell)$ of $X$. Since each extension $Z_{i-1} + S_i$ is fundamental (by Lemma~\ref{lem:fund_ext_step} in Case A and by Lemma~\ref{lem:fund_witness}(1) in Case B), this is a fundamental generator of $X$.

\medskip
\noindent\textbf{Inductive step ($k \ge 2$).}
Assume the theorem holds for all ESPMs of order at most $k-1$. Let $X$ have ESPM order $k$. By Definition~\ref{def:espm}, every proper sub-SO of $X$ has ESPM order at most $k-1$; in particular, whenever Case B fires, $Z_i \in \Sos$ (as argued above), $(Z_i)$ is an ESPM of order at most $k-1$, so Case B is always valid. The common construction yields a sequence of fundamental extensions partitioning $G(X)$, and the irreducibility argument applies unchanged. Hence $X$ admits a fundamental generator.

\end{proof}

\section{Synergy and Complementarity in Biological Reaction Networks}
\label{RelevantExtensions}

Theorem~\ref{thm:fund_gen} demonstrates that the ERC hierarchy, equipped with fundamental synergies and complementarities, is sufficient to obtain all relevant generators of a reaction network. In order illustrate how succint this structure is, we show the ERC hierarchy with all its synergies and complementarities in Figure~\ref{fig:synergy_complementarity}.

\begin{figure}[h]
\centering
\includegraphics[width=0.90\linewidth]%
{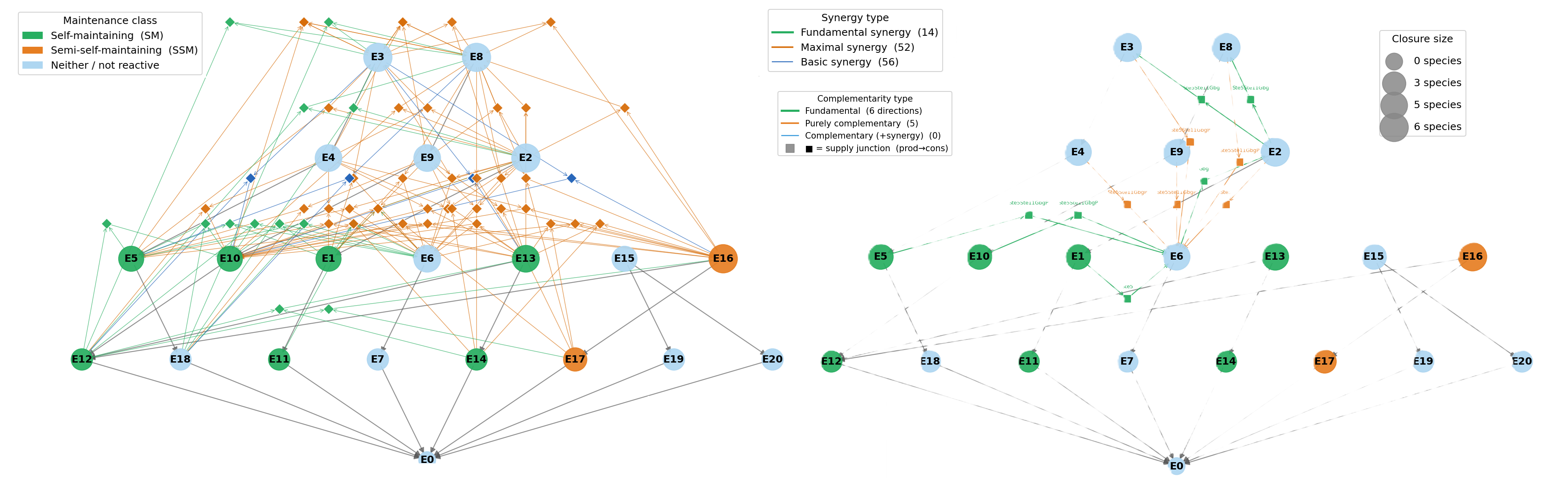}
\caption{{\bf ERC Hierarchy enhanced with Synergies and Complementarities:} ERC hierarchy for Biomodel~237
  {\it Schaber2006\_Pheromone\_Starvation\_Crosstalk},
  annotated with synergy junctions (diamond nodes, left) and complementarity
  junctions (square nodes, right).
  Colour encodes synergy class: green = fundamental, orange = maximal/minimal,
  blue = simple.}
\label{fig:synergy_complementarity}
\end{figure}

Simple inspection shows that fundamental synergies and complementarities are much sparser than their basic counterparts.
In what follows we perform a systematic analysis across biological databases of these fundamental structures to quantify how fundamental structures scale with the reaction network size.

\subsection{Dataset and analytical approach}
\label{sec:growth_dataset}

We analysed \(438\) reaction networks drawn from two sources: \(426\) models from
the BioModels repository~\cite{BioModels2020} (covering metabolic, signalling,
gene-regulatory, cell-cycle, circadian, apoptotic, and immune sub-classes) and
\(11\) genome-scale metabolic reconstructions from the BiGG
database~\cite{schellenberger2010bigg} (the largest in our analysis).  All networks satisfy \(|\Es|\geq 4\) after excluding \(E_\emptyset\); reactions
range from \(4\) to \(1{,}000\) (median~17) and ERC count from \(4\) to
\(642\) (median~9). Algorithms to build the ERCs hierarchy from the reaction network, as well as for verifying synergies and complementarities of different kinds, are all brute force.

Growth tendencies are estimated by fitting power laws
\(y = A \cdot x^{\alpha}\) via OLS on log-log data.
For raw counts \(\alpha\) describes absolute growth; for normalised fractions of
\(\binom{|\Es|}{k}\) (\(k=2\) binary synergies and complementarities, \(k=3\) ternary synergies) the fitted exponent
\(\beta\) captures density relative to all possible \(k\)-subsets and is
estimated by a separate regression on the fraction data directly.
Networks in which a property is absent are excluded from the respective fit.

\subsection{ERC hierarchy}
\label{sec:erc_growth}

\begin{figure}[h]
  \centering
  \includegraphics[width=\linewidth]{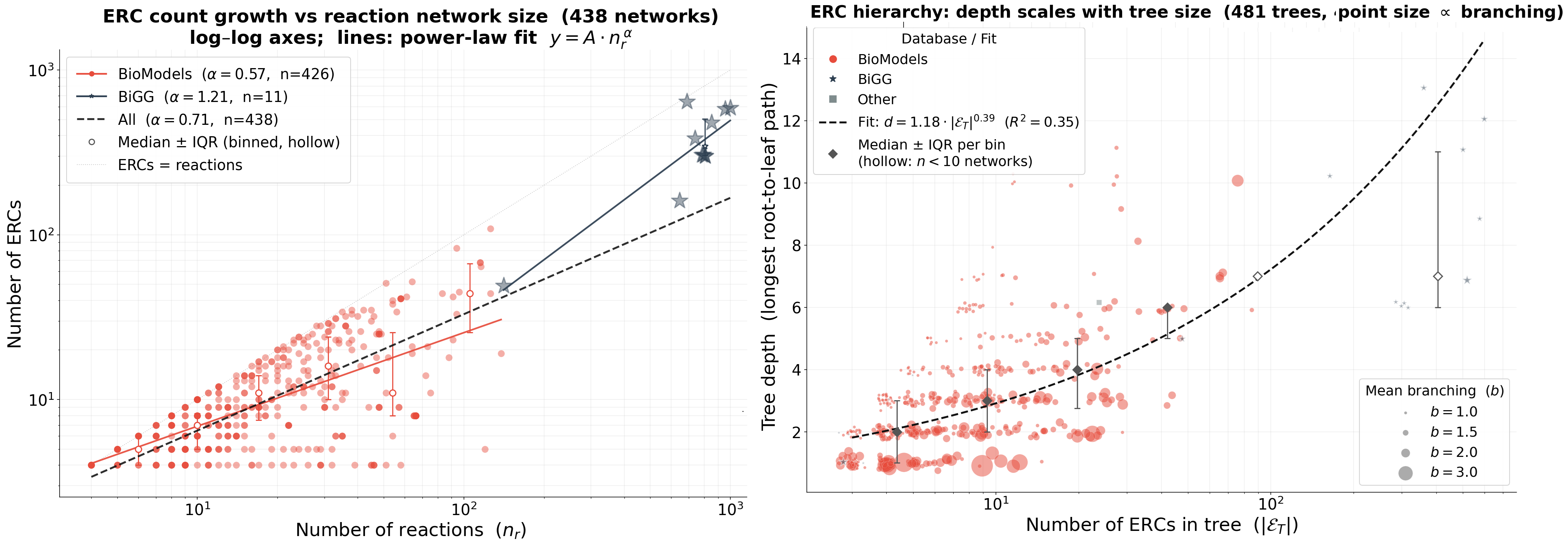}
  \caption{ERC hierarchy structure across \(438\) biological networks
    (BioModels and BiGG, \(|\Es|\geq 4\)).
    \textbf{Left:} ERC count \(|\Es|\) as a function of the number of
    reactions \(n_r\) (log--log axes); the dashed diagonal marks
    \(|\Es|=n_r\).
    \textbf{Right:} tree depth (longest root-to-leaf path) versus mean
    branching (average children per non-leaf ERC node);
    point size proportional to \(|\Es|\).}
  \label{fig:erc_growth}
\end{figure}

Figure~\ref{fig:erc_growth} characterises how the ERC hierarchy scales with network size.
The left panel shows that ERCs are consistently fewer than reactions, and often
substantially so: the count grows sublinearly
(\(\alpha_{\mathrm{all}}\approx 0.71\), \(R^{2}=0.60\)), so doubling the
number of reactions increases \(|\Es|\) by only a factor of roughly \(1.6\).
BioModels networks display an even shallower slope
(\(\alpha_{\mathrm{BioMD}}\approx 0.57\)), while the eleven BiGG
reconstructions suggest a super-linear trend
(\(\alpha_{\mathrm{BiGG}}\approx 1.21\)), due to small sampling.
The aggregaterd sublinearity reflects a structural integration of larger networks: the ERC
representation becomes increasingly compressed relative to the raw reaction
network's size.

The right panel plots, for
each qualifying tree, its number of ERC nodes ($\Es_{\cal T}$) against its depth;
marker size encodes mean branching so that near-chain trees ($b \approx 1$)
appear as small dots and highly branching trees as large ones.
Depth grows with tree size following a sublinear power law,
\begin{equation}\label{eq:depth_scaling}
  d \;\approx\; 1.18 \cdot \Es_{\cal T}^{\,0.39}
  \qquad (R^{2}=0.35,\; p \ll 10^{-10}),
\end{equation}
with exponent $\gamma \approx 0.39$ (95\,\% CI: $0.35$--$0.44$).
A tree of constant branching $b$ would yield $d \sim \log_b \Es_{\cal T}$
(logarithmic growth, $\gamma \to 0$); a pure chain would yield $d = \Es_{\cal T} - 1$
($\gamma = 1$).
The observed intermediate exponent places ERC trees between these extremes:
they are compressed relative to chains but more elongated than a balanced
tree of the same mean branching would predict.
Crucially, marker sizes remain uniformly small across the full horizontal
range: branching does not intensify as trees grow larger, consistent with
the near-zero raw correlation $r(\Es_{\cal T}, b) \approx 0$.
Together, these two results describe a \emph{compressively scaling} ERC
hierarchy: more ERCs produce deeper trees at a sub-square-root pace, while
the internal branching structure stays locked near one to two children per
internal node regardless of tree size.

\subsection{Synergy and complementarity}
\label{sec:syn_comp_growth}

Figure~\ref{fig:growth_tendency} reports the growth of synergy and
complementarity pair counts (top panels) and their normalised fractions
\(f = \mathrm{count}/\binom{|\Es|}{k}\) (bottom panels) as functions of
\(|\Es|\).

\begin{figure}[h]
  \centering
  \includegraphics[width=\linewidth]{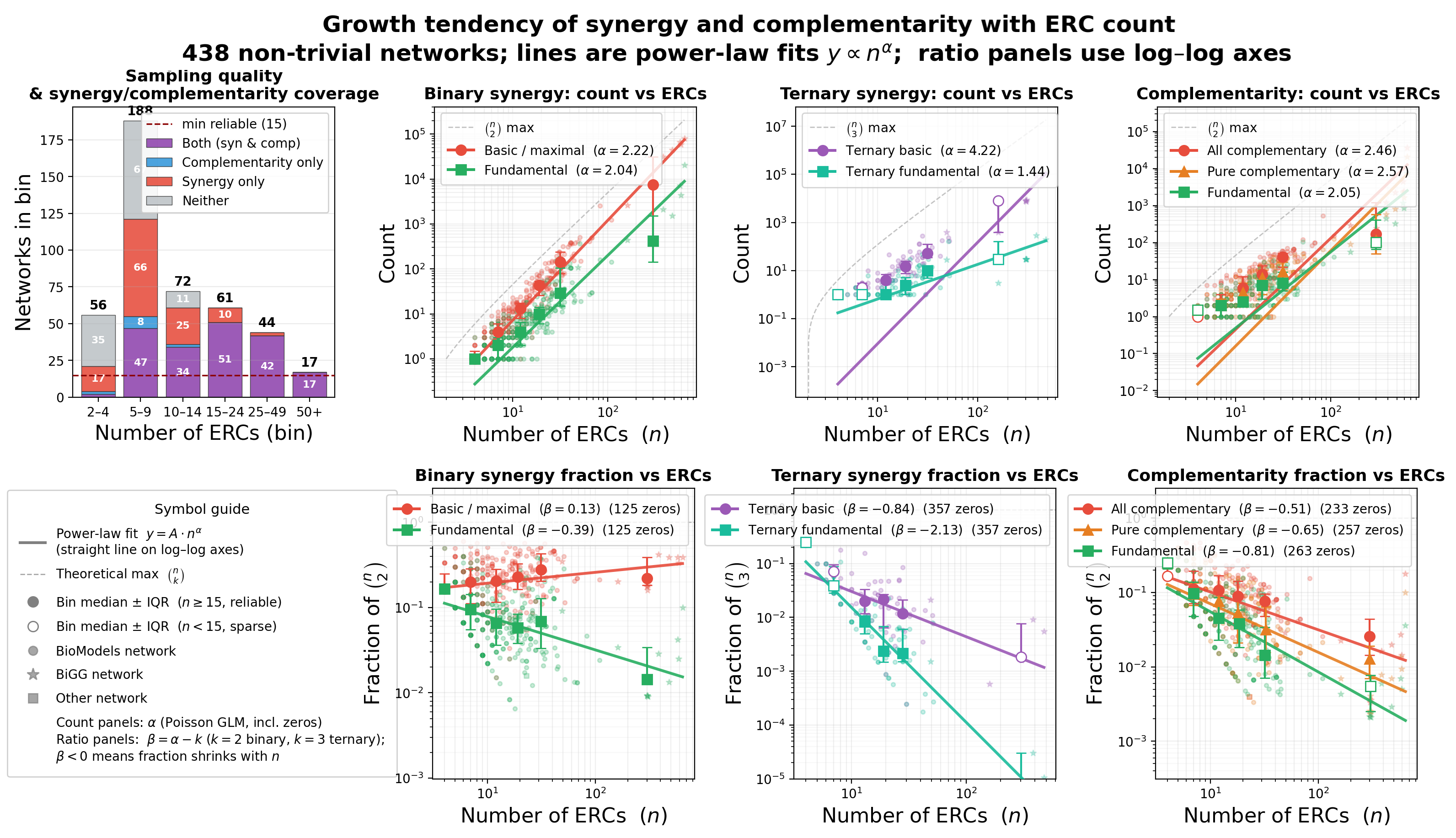}
  \caption{Growth of synergy and complementarity pair counts (top) and
    normalised fractions (bottom) as functions of ERC count
    (\(|\Es| \geq 4\), \(438\) networks).
    Solid coloured lines are power-law fits; hollow markers are binned medians
    \(\pm\) IQR.
    Dashed reference lines in count panels mark the theoretical maxima
    \(\binom{n}{2}\) and \(\binom{n}{3}\); the horizontal dashed line in
    fraction panels marks fraction \(= 1\).
    Networks with count \(= 0\) are excluded from the fits.}
  \label{fig:growth_tendency}
\end{figure}

First, it is important to mention that synergy and complementarity are pervasive in complex reaction networks. The top-left plot shows that every single reaction networks with more than 15 reactions encodes synergy and all except one above 25 reactions encode complementarities. The raw counts of synergies and complementarities grow rapidly (top plots), and for basic synergy the fitted exponent appears to exceed the combinatorial ceiling \(\binom{n}{2}\sim n^{2}\).  This can be explained by larger databases over-representing highly integrated networks (signalling cascades, cell-cycle regulators) inflating \(\alpha\). 

The normalised fractions eliminate this bias and reveal the structurally meaningful tendencies. Two contrasting patterns emerge.  Basic synergy constitutes a roughly constant fraction of all ERC pairs regardless of network size (\(\beta\approx+0.13\), nearly flat): the probability that a randomly chosen pair is synergetic is
approximately scale-invariant.  Fundamental synergies display a clear and significant negative trend
(\(\beta\approx -0.4\)): as networks grow, the density of essential synergetic
information shrinks.  Ternary synergies (ERC triples jointly enabling a target
not reachable by any pairwise sub-collection) are rarer still --- present in
only 18\% of networks versus 71\% for binary synergy --- and their fundamental
fraction declines steeply, consistent with the expectation that higher-order
irreducible generators become increasingly sparse as network size grows.

Complementarity fractions decline even more pronouncedly.  All three
complementarity measures --- basic, pure, and fundamental --- show negative
\(\beta\), with fundamental complementarity exhibiting the steepest fall.

Taken together, these results establish a consistent pattern: the ERC count
grows sublinearly with reaction network size, and the proportion of
fundamental relationships among those ERCs shrinks further
still.  Larger biological reaction networks achieve persistence through a smaller and more selective set of irreducible generative relationships, while the vast majority of ERC combinations become structurally redundant.

\section{Conclusion}
This work aims to advance the study of the inner structure of reaction networks: Existing approaches for computing persistent modules are intractable for even moderately large reaction networks due to a not well understood combinatorial explosion. Rather than pursuing algorithmic optimizations, we examined the inner generative structure of reaction networks, developed a number of formal results to characterize when a generator is made of productively novel steps (see Theorems~\ref{thm:fund_elem} and~\ref{thm:fund_gen}), and conceived irreducible and fundamental generators as a way to think on the generation of persistent modules in an optimal way. The latter provides both theoretical insight into reaction network structure and a foundation for computationally efficient algorithms to compute persistent modules. 

The most important idea in this article is {\it productive novelty}, which captures situations in which the persistence of a combined module cannot be inferred directly from the persistence of its constituent parts. We further show that productive novelty arises through two fundamental mechanisms—synergy and complementarity. Two modules are synergetic if their combination enables a novel reaction that neither module can trigger independently, whereas they are complementary when their combination is better able to sustain its own productive activity than either constituent alone (formalized in Sections~\ref{Synergy} and~\ref{Complementarity}). These concepts were refined to provide a principled perspective on how increasingly complex persistent organizations can emerge from irreducible sequences of ERCs. Reading a persistent module as a minimal skeleton of such relationships therefore offers a principled decomposition of a reaction network into its irreducible functional units, of the kind sought in the study of autocatalytic cores, metabolic modularity, and the origins of self-maintaining organization.

We confirmed that the fundamental structures underlying persistent modules are radically smaller than traditional complexity measures based on reaction count. As shown in Figure~\ref{fig:erc_growth} (left), ERCs grow sublinearly with reactions (exponent $\approx 0.71$, $R^{2}=0.60$): doubling the reaction count increases $|\Es|$ by only a factor of ${\sim}1.6$, so the ERC representation is already a compressed skeleton of the full network.
The ERC hierarchy reveals a second layer of compression in its internal organization: tree depth scales as $d \approx 1.18\cdot|\Es|^{0.39}$
(Figure~\ref{fig:erc_growth}, right panel), placing ERC hierarchy trees between the extremes of a balanced tree ($\gamma\to 0$) and a pure chain
($\gamma=1$), while mean branching remains locked near one to two children per internal node regardless of tree size.
Thus the hierarchy is not only smaller than the reaction network in node count but also internally compact in its structural organization.
More strikingly, Figure~\ref{fig:growth_tendency} shows that the proportion of \emph{fundamental} synergies and complementarities among all possible ERC
pairs shrinks as networks grow, even as raw counts increase: basic synergy constitutes a roughly constant fraction of ERC pairs, while fundamental
synergy and all forms of complementarity become progressively sparser relative to the combinatorial ceiling $\binom{|\Es|}{k}$.
Theorem~\ref{thm:fund_gen} and the empirical results of Section~\ref{RelevantExtensions} together imply that the generatively
relevant structure of a reaction network is both theoretically well-defined and empirically small: fundamental synergies and complementarities are
provably sufficient to reconstruct all persistent modules, and their density relative to all possible ERC combinations shrinks as networks grow
(Figure~\ref{fig:growth_tendency}).

Our results suggest that efficient algorithms to build persistent modules can be built, but we do not prescribe algorithms yet. Such endeavour shall address questions such as \emph{how} to identify fundamental synergies or complementarities efficiently, nor \emph{how} to explore the ERC hierarchy effciently. Our statistical analysis used brute force methods to compute all forms of synergy and complementarities and hence they still remain inapplicable to compute persistent modules for large reaction networks. Turning this work into a practical algorithm for computing all persistent modules of large reaction networks is left as future work.

Besides the challenges and opportunities in the quest for an efficient algorithmic framework, we would like to mention some of the areas where
our results could find immediate application. RAF theory faces critical limitations in identifying minimal autocatalytic cores; current methods cannot decompose these networks systematically~\cite{xavier2020autocatalytic}. Peng et al.~\cite{peng2022hierarchical} noted that databases are tiny compared to chemistry's immensity and that algorithms struggle with combinatorial explosion. The RAF--Chemical Organizations link~\cite{hordijk2018autocatalytic} suggests that our framework, by isolating the fundamental synergetic and complementary pairs within an ERC hierarchy, could provide principled decompositions of autocatalytic sets into their minimal generative cores.
For compartments, our approach may explain emergent properties absent in bulk~\cite{zambrano2022programmable} and enable distributed computation~\cite{tang2018gene}: distinguishing concentration effects from genuine complementarity --- which current approaches cannot do ---
is now formalised by Definition~\ref{def:fundamental_complementarity}. Scaling to natural
complexity~\cite{lentini2017two,adamala2016engineering} causes combinatorial explosion in compartmentalization, but the sublinear growth
of ERCs with reaction count (Figure~\ref{fig:erc_growth}) indicates that the ERC representation remains tractable even as compartment complexity
grows. Similarly, drug discovery lacks a universal synergy criterion~\cite{wang2024deep}, and network medicine struggles to separate
direct from indirect effects (e.g.\ COVID-19 repurposing~\cite{gysi2021network}), while multi-target design faces
severe physicochemical constraints~\cite{morphy2005designed}. Decomposing target networks into fundamental generators could reveal
minimal synergistic sets, and the empirical sparsity of fundamental synergies means such sets are small in practice.
Most critically, genome-scale metabolic networks have grown from 600~\cite{edwards2000escherichia} to over 2,400
reactions~\cite{orth2011comprehensive}, making exhaustive analyses intractable beyond hundreds of
reactions~\cite{trinh2009elementary,erdrich2015algorithm}. Tools such as NetworkReducer already reduce 2,384 reactions to 105 while
preserving metabolic function~\cite{erdrich2015algorithm}. Our results now provide a theoretical explanation for why such drastic
reductions are possible: most reactions contribute no fundamental synergy or complementarity, and are therefore generatively redundant.
The shrinking density of fundamental structures with network size (Figure~\ref{fig:growth_tendency}) further suggests that the ratio of
redundant to essential relationships grows with network size, making larger networks \emph{more} amenable --- not less --- to reduction via
this framework. This opens a concrete path toward scalable algorithms that identify productive relationships directly, bypassing exhaustive search and
enabling the analysis of networks with thousands of reactions~\cite{centler2008computing,peter2023computing}.

We believe that the formal methods developed here can significantly expand the capacities we have to analyze and understand the inner workings of life.

\section*{Data Availability Statement}

This work used BioModels and BiGG data publicly available online, our script to compute statistical analysis are accessible at {\url https://www.github.com/$\sim$tveloz/pyCOT/projects/Generative\_Structure\_Orgs}

\bibliographystyle{plain}
\bibliography{biblio}

\end{document}